\newtheorem{proposition}{Proposition}
\newtheorem{definition}{Definition}
\newtheorem{theorem}{Theorem}
\theoremstyle{plain}
  \newcommand*{\rom}[1]{\expandafter\@slowromancap\romannumeral #1@}
\useunder{\uline}{\ul}{}
\def\BibTeX{{\rm B\kern-.05em{\sc i\kern-.025em b}\kern-.08em
    T\kern-.1667em\lower.7ex\hbox{E}\kern-.125emX}}
\def\0{{\mathbf 0}}
\def\1{{\mathbf 1}}
\def\b{{\mathbf b}}
\def\c{{\mathbf c}}
\def\e{{\mathbf e}}
\def\n{{\mathbf n}}
\def\p{{\mathbf p}}
\def\q{{\mathbf q}}
\def\u{{\mathbf u}}
\def\v{{\mathbf v}}
\def\x{{\mathbf x}}
\def\A{{\mathbf A}}
\def\B{{\mathbf B}}
\def\C{{\mathbf C}}
\def\D{{\mathbf D}}
\def\F{{\mathbf F}}
\def\H{{\mathbf H}}
\def\I{{\mathbf I}}
\def\L{{\mathbf L}}
\def\M{{\mathbf M}}
\def\S{{\mathbf S}}
\def\W{{\mathbf W}}
\def\cC{{\mathcal C}}
\def\cE{{\mathcal E}}
\def\cF{{\mathcal F}}
\def\cG{{\mathcal G}}
\def\cH{{\mathcal H}}
\def\cK{{\mathcal K}}
\def\cL{{\mathcal L}}
\def\cS{{\mathcal S}}
\def\cU{{\mathcal U}}
\def\cV{{\mathcal V}}
\def\ie{{\textit{i.e.}}}
\def\eg{{\textit{e.g.}}}
\begin{document}
%
\title{Point Cloud Sampling via Graph Balancing and Gershgorin Disc Alignment}
%
%
%

\author{Chinthaka~Dinesh,~\IEEEmembership{Student~Member,~IEEE,}
        Gene~Cheung,~\IEEEmembership{Fellow,~IEEE,}
        and~Ivan~V.~Baji\'{c},~\IEEEmembership{Senior~Member,~IEEE}
\IEEEcompsocitemizethanks{\IEEEcompsocthanksitem Chinthaka Dinesh and Ivan V. Baji\'{c} are with the School of Engineering
Science, Simon Fraser University, Burnaby, BC, Canada. \protect\\ E-mail: hchintha@sfu.ca and ibajic@ensc.sfu.ca.
\IEEEcompsocthanksitem Gene Cheung is with the Department of Electrical Engineering \& Computer Science, York University, Toronto, Canada. E-mail: genec@yorku.ca. }
}

%
%

\markboth{}%
{Shell \MakeLowercase{\textit{et al.}}: Bare Demo of IEEEtran.cls for Computer Society Journals}
%




\IEEEtitleabstractindextext{
\begin{abstract}
\justifying{3D point cloud (PC)---a collection of discrete geometric samples of a physical object's surface---is typically large in size, which entails expensive subsequent operations like viewpoint image rendering and object recognition.
Leveraging on recent advances in graph sampling, we propose a fast PC sub-sampling algorithm that reduces its size while preserving the overall object shape.
Specifically, to articulate a sampling objective, we first assume a super-resolution (SR) method based on feature graph Laplacian regularization (FGLR) that reconstructs the original high-resolution PC, given 3D points chosen by a sampling matrix $\H$.
We prove that minimizing a worst-case SR reconstruction error is equivalent to maximizing the smallest eigenvalue $\lambda_{\min}$ of a matrix $\H^{\top} \H + \mu \cL$, where $\cL$ is a symmetric, positive semi-definite matrix computed from the neighborhood graph connecting the 3D points.
Instead, for fast computation we maximize a lower bound $\lambda^-_{\min}(\H^{\top} \H + \mu \cL)$ via selection of $\H$ in three steps.
Interpreting $\cL$ as a generalized graph Laplacian matrix corresponding to an unbalanced signed graph $\cG$, we first approximate $\cG$ with a balanced graph $\cG_B$ with the corresponding generalized graph Laplacian matrix $\cL_B$.
Second, leveraging on a recent theorem called Gershgorin disc perfect alignment (GDPA), we perform a similarity transform 
$\cL_p = \S \cL_B \S^{-1}$ so that Gershgorin disc left-ends of $\cL_p$ are all aligned at the same value $\lambda_{\min}(\cL_B)$.
Finally, we perform PC sub-sampling on $\cG_B$ using a graph sampling algorithm to maximize $\lambda^-_{\min}(\H^{\top} \H + \mu \cL_p)$ in roughly linear time. 
Experimental results show that 3D points chosen by our algorithm outperformed competing schemes both numerically and visually in SR reconstruction quality.}
\end{abstract}

\begin{IEEEkeywords}
Point cloud processing, graph signal processing, graph sampling, Gershgorin circle theorem, graph Laplacian regularizer. 
\end{IEEEkeywords}}

\maketitle

%
\IEEEpeerreviewmaketitle

\IEEEraisesectionheading{
\section{Introduction}
\label{sec:introduction}}

\IEEEPARstart{P}{oint} cloud (PC) is a collection of typically non-uniform discrete geometric samples (\ie, 3D coordinates) of a physical object in 3D space. 
PC has become a popular 3D object representation format, for applications such as immersive communication and virtual / augmented reality (VR / AR)~\cite{shen2013, mekuria2017}.
A common PC can be very large in size---up to millions of 3D points. 
This means that subsequent 
operations like perspective rendering and object detection/recognition~\cite{zhou2018voxelnet} can be very computation-intensive.

To reduce computation load, one can perform \textit{PC sub-sampling}: selecting a representative 3D point subset while preserving the geometric structure of the original PC.
Early works in PC sub-sampling employed simple schemes like random or regular sampling that do not preserve shape characteristics well~\cite{pomerleau2013comparing, rusu2012downsampling, Rusu2011 }.
Recent schemes pro-actively chose samples to preserve geometric features like corners and edges~\cite{qi2019, chen2018, yang2015feature}, but do not ensure overall quality reconstruction of the original PC.
In this paper, we orchestrate a graph sampling method \cite{yuanchao2019ICASSP, bai2020} to choose 3D points in a PC to minimize a worst-case reconstruction error. 
\textit{To our knowledge, this is the first PC sub-sampling work that systematically minimizes a global reconstruction error metric.}

The \textit{graph sampling} problem in the field of \textit{graph signal processing} (GSP) \cite{ortega18ieee,cheung18} bears strong resemblance to the PC sub-sampling problem: 
select a subset of graph nodes to collect signal samples in order to minimize a reconstruction error, assuming that the target graph signal is smooth or bandlimited. 
The majority of existing graph sampling methods \cite{anis16, tsitsvero2016, chamon2018, chen2015_sampling} require computing extreme eigenvectors (corresponding to the smallest or largest eigenvalues) of a large adjacency or graph Laplacian (sub-)matrix per iteration, which is expensive.
In contrast, \cite{yuanchao2019ICASSP,bai2020} proposed an eigen-decomposition-free method called \textit{Gershgorin Disc Alignment Sampling} (GDAS)\footnote{GDAS is described along with an illustrative example in Section\;\ref{subsec:GDAS}.} based on the well-known \textit{Gershgorin Circle Theorem} (GCT) \cite{varga04}.
In summary, the goal is to maximize the lower bound of the smallest eigenvalue $\lambda_{\min}(\B)$ of a coefficient matrix $\B = \H^{\top} \H + \mu \L$ in a linear system, where $\L$ is a combinatorial graph Laplacian matrix.
GDAS selects nodes via sampling matrix $\H$ to maximize the smallest Gershgorin disc left-end of a similar-transformed matrix $\C = \S \B \S^{-1}$ (with the same eigenvalues as $\B$), which is the smallest eigenvalue lower bound $\lambda^-_{\min}(\C) \leq \lambda_{\min}(\C) =  \lambda_{\min}(\B)$ by GCT. 

However, GDAS requires matrix $\L$'s disc left-ends to be initially aligned at the same value before graph nodes are chosen.
To use GDAS for PC sub-sampling, in this paper we first articulate a sampling objective by assuming a post-processing procedure that \textit{super-resolves} (SR) a sub-sampled PC back to full resolution based on \textit{feature graph Laplacian regularization} (FGLR) \cite{dinesh2020}.
The SR procedure amounts to solving a system of linear equations with coefficient matrix $\B = \H^{\top} \H + \mu \cL$, 
where $\cL$ is a symmetric, positive semi-definite (PSD) matrix computed from a neighborhood graph connecting 3D points.
In general, $\cL$ does not have its disc left-ends aligned at any one value. 
We prove that minimizing a worst-case SR reconstruction error is equivalent to maximizing the smallest eigenvalue $\lambda_{\min}$ of $\B$.

Instead of maximizing $\lambda_{\min}$ of $\B$ directly, for fast computation we maximize a lower bound $\lambda^-_{\min}$ of $\B$ efficiently in three steps.
First, we interpret matrix $\cL$ as a generalized graph Laplacian matrix for an \textit{unbalanced} signed graph $\cG$ (with positive and negative edges). 
We propose an optimization algorithm to approximate $\cG$ with a \textit{balanced} graph $\cG_B$, with the corresponding generalized graph Laplacian matrix $\cL_B$, based on the well-known \textit{Cartwright-Harary Theorem} (CHT)~\cite{easley2010networks} for balanced graphs.

Second, we leverage on a recent theorem called \textit{Gershgorin disc perfect alignment} (GDPA)
\cite{yang2020} that proves disc left-ends of a generalized graph Laplacian matrix $\M$ for a balanced and irreducible signed graph can be \textit{perfectly} aligned at $\lambda_{\min}(\M)$ via a similarity transform $\S \M \S^{-1}$, where $\S = \text{diag}(1/v_1, 1/v_2, \ldots)$, and $\v=[v_1\hspace{4pt} v_2\hspace{4pt} \ldots]^{\top}$ is the first eigenvector of $\M$. 
Using this theorem, we compute similarity transform $\cL_p = \S \cL_B \S^{-1}$ so that disc left-ends of $\cL_p$ are aligned at $\lambda_{\min}(\cL_B)$, where $\S = \text{diag}(1/v_1, 1/v_2, \ldots)$ and $\v$ is the first eigenvector of $\cL_B$, computed using a known fast algorithm\footnote{Though our proposal computes the first eigenvector $\v$ of $\cL_B$ once per PC, it is still much faster than previous graph sampling algorithms \cite{anis16, tsitsvero2016, chamon2018, chen2015_sampling} that require computing extreme eigenvector(s) of a large matrix per sample.}  \textit{Locally Optimal Block Preconditioned Conjugate Gradient} (LOBPCG)~\cite{knyazev2001}. 
Finally, we adopt GDAS \cite{yuanchao2019ICASSP,bai2020} to choose 3D points for PC sub-sampling to maximize $\lambda^-_{\min}(\H^{\top} \H + \mu \cL_p)$ via selection of $\H$.
Extensive experimental results show that our PC sub-sampling method outperforms competing methods~\cite{Ranzuglia2012, corsini2012efficient, pomerleau2013comparing, rusu2012downsampling, shi2011, yang2015feature, qi2019, dinesh2020ICIP} in super-resolved quality both numerically evaluated with two common PC metrics~\cite{tian2017}, and visually for a wide range of PC datasets.

The outline of the paper is as follows.
We first discuss related works in PC sub-sampling in Section\;\ref{sec:related}.
We review basic concepts in GCT, GSP and signed graph balancing in Section\;\ref{sec:prelim}.
We describe our PC SR algorithm, a simplified variant of \cite{dinesh19}, in Section\;\ref{sec:SR}.
We present our PC sub-sampling strategy in Section\;\ref{sec:Sampling}.
We formulate our signed graph balancing problem and present our graph balancing algorithm in Section\;\ref{sec:balance} and \ref{sec:algo}, respectively.
Finally, we present results and conclusions in Section\;\ref{sec:exp} and \ref{sec:conclude}, respectively.

\section{Related Work}
\label{sec:related}
\label{sec:related}

We categorize existing methods for point cloud (PC) sub-sampling into two groups: indirect 
and direct. 
Given a sampling budget to choose 3D points, direct methods select points from a given PC directly, while indirect methods first construct a triangular or polygonal mesh from a given PC, and then perform node merging while preserving local structure. 
Examples of indirect methods include Poission disk sub-sampling (PDS)~\cite{corsini2012efficient}, mesh element sub-sampling (MESS), Monte-Carlo sub-sampling (MCS), and stratified triangle sub-sampling (STS) \cite{Ranzuglia2012}. 
There are two main limitations of indirect methods: 1) mesh construction is computationally expensive, and 2) mesh-based sub-sampling fundamentally changes the locations of original 3D points, which can cause geometric distortions~\cite{chen2018}. 

Based on the core techniques used, we divide direct methods into five categories\footnote{Some of these categories can also be found in~\cite{qi2019}.}:
random sub-sampling, averaging-based sub-sampling, clustering-based sub-sampling, progression-based sub-sampling, and, GSP-based sub-sampling.\\
\noindent \textbf{Random Sub-sampling:}
These methods~\cite{pomerleau2013comparing,anderson2005lidar} randomly choose a subset of points from a given PC using a uniform probability distribution, often resulting in distorted geometric features like edge, valleys, and corners.\\
\noindent\textbf{Averaging-based Sub-sampling:}
\cite{ryde20103d, loop2013real, rusu2012downsampling, Rusu2011} partition the 3D space in which a PC resides into 3D boxes (called \textit{voxels}), then quantize all the points in each voxel to its centroid. 
Similarly, \cite{peng2005geometry, hornung2013octomap} use an octree representation of PC to perform sub-sampling. 
Due to point averaging per voxel or octree branch, geometric features like edges and valleys are often coarsened and distorted.\\ 
\noindent\textbf{Clustering-based Sub-sampling:}
In these methods, first, a given PC is divided into non-overlapping clusters. 
Then, one or more points are chosen to represent each cluster. 
In~\cite{yu2010asm}, points are clustered  locally to preserve geometry features, while~\cite{shi2011} employs the $k$-means clustering algorithm to gather similar points together in the spatial domain, and then uses the maximum normal vector deviation to further subdivide a given cluster into smaller sub-clusters. 
In~\cite{benhabiles2013}, a coarse-to-fine approach is proposed to obtain a coarse cloud using volumetric clustering. However, these clustering-based sub-sampling causes distortion in intricate geometric details.\\  \noindent\textbf{Progression-based Sub-sampling:}
The farthest point resampling method is proposed in~\cite{moenning2004}, which selects the points progressively according to Voronoi diagrams. In~\cite{song2009}, the importance of a point is defined in terms of its neighboring points and their surface normals. Then the least important points are removed, and the importance of remaining points are updated. Further, in \cite{yang2015}, the mean curvature of points is computed based on \textit{Principal Component Analysis} (PCA) and Fourier Transform. Then the points around the point with the largest curvature are progressively removed. However, in some cases, geometric feature like edges can be distorted due to errors in finding important points.
\\              
\noindent\textbf{GSP-based Sub-sampling:}
In~\cite{chen2018}, a graph filter-based operator is proposed to extract application-dependent features of a given PC. 
Then samples from a given PC are found to minimize a feature reconstruction error. If the overall PC reconstruction error is the objective, then~\cite{chen2018} defaults to a uniform random sub-sampling scheme, which often leads to distorted geometric features like edges, valleys, and corners as previously described. On the other hand, if the objective is to minimize the reconstruction error of  geometric features, the approach in~\cite{chen2018} may result in extremely non-uniform sampling, resulting in distortion in the overall PC reconstruction shape. 
The approach in~\cite{qi2019} strikes a balance between preserving sharp features and keeping uniform density during sampling. However, the issue of how to best balance between feature preservation and near-uniform sub-sampling in order to minimize global reconstruction error is left unaddressed.

\noindent \textbf{Our Previous Work}: 
This work is an 
extension of our recent conference paper~\cite{dinesh2020ICIP}, which presented a PC sub-sampling strategy based on fast graph sampling GDAS \cite{yuanchao2019ICASSP}.
The key contribution in~\cite{dinesh2020ICIP} is an ad-hoc graph balancing procedure to construct a balanced graph $\cG_B$ given an arbitrary signed graph $\cG$---a pre-processing step prior to GDAS.
Instead of the ad-hoc procedure in \cite{dinesh2020ICIP}, in this paper we design an optimization algorithm to construct $\cG_B$ such that 
$\lambda_{\min}(\H^{\top} \H + \mu \cL_B)$ forms a tight lower bound for $\lambda_{\min}(\H^{\top} \H + \mu \cL)$ in order to minimize the worst-case PC-SR reconstruction error from chosen samples. 
We show in Section\;\ref{sec:exp} that our graph balance algorithm outperforms the ad-hoc procedure in \cite{dinesh2020ICIP} noticeably in SR reconstruction error.
\section{Preliminaries}
\label{sec:prelim}

We first review basic concepts in Gershgorin Circle Theorem (GCT), graph signal processing (GSP), and graph balance, which are needed to understand our proposed graph-based PC sub-sampling strategy.

\subsection{Gershgorin Circle Theorem (GCT)} 

Denote by $\F \in \mathbb{R}^{N\times N}$ a real square matrix. 
Corresponding to each row $i$ in $\F$, we define a \textit{Gershgorin disc} with radius $r_i~=~\sum_{j\neq i}|\F(i,j)|$ and center $c_i~=~\F(i,i)$. 
GCT~\cite{varga04} states that each eigenvalue $\lambda$ of $\F$ lies within at least one Gershgorin disc $i$, \textit{i.e.}, $\exists i$ such that $c_i - r_i \leq \lambda \leq c_i + r_i$. 
Thus, we can write GCT lower and upper bounds for minimum and maximum eigenvalues of $\F$ as
\begin{equation}
  \lambda_{\text{min}}(\F)\geq \min_{i}\{c_i - r_i\}, ~~~ \lambda_{\text{max}}(\F)\leq \max_{i}\{c_i + r_i\}.
\end{equation}

\subsection{Graph Signal Processing}

Consider an undirected graph $\mathcal{G}~=~(\mathcal{V}, \mathcal{E},\mathcal{U})$ composed of a node set $\mathcal{V}$ of size $N$ and an edge set $\mathcal{E}$ specified by $(i,j,w_{i,j})$, $i\neq j$, where $i,j\in\mathcal{V}$ and $w_{i,j}\in\mathbb{R}$ is the edge weight that encodes the (dis)similarity or (anti-)correlation between nodes $i$ and $j$, depending on the sign of $w_{i,j}$. 
In addition, each node $i$ may have a self-loop $(i)\in\mathcal{U}$ with weight $u_{i}\in\mathbb{R}$.

One can characterize $\cG$ with a symmetric \textit{adjacency matrix} $\mathbf{W} \in \mathbb{R}^{n \times n}$, where $\W(i,j)~=~\W(j,i)~=~w_{i,j}$, $(i,j)\in\mathcal{E}$ and $\W(i,i)~=~u_i$, $(i)\in\mathcal{U}$. 
A diagonal \textit{degree matrix} $\mathbf{D}$ has diagonal entries $\D(i,i) = \sum_j \W(i,j)$, and off-diagonal entries $\D(i,j) = 0, \forall i \neq j$. 
Given $\mathbf{W}$ and $\mathbf{D}$, the \textit{combinatorial graph Laplacian matrix}~\cite{chung1997} is defined as~$\mathbf{L}~=~\mathbf{D}-\mathbf{W}$.
Further, a \textit{generalized graph Laplacian matrix}~\cite{Biyikoglu2005} that accounts for self-loops in $\cG$ is defined as $\M~=~\L+\text{diag}(\W)$, where $\text{diag}(\W)$ is a diagonal matrix collecting the diagonal entries of $\W$ representing self-loops in $\cG$. 
Note that any real symmetric matrix can be interpreted as a generalized graph Laplacian matrix $\M$ of some undirected graph.

A vector $\x \in \mathbb{R}^N$ can be interpreted as a \textit{graph signal}, where $x_i$ is a scalar value assigned to node $i\in\mathcal{V}$ in a given graph $\cG$. 
A graph variation term $\x^{\top}\L\x$ can then be written as \cite{chung1997}
\begin{equation}
 \x^{\top}\L\x=\sum_{i,j\in\mathcal{E}}w_{i,j}(x_i-x_j)^{2}.
 \label{eq:GLR}
\end{equation}
If $\cG$ contains only positive edges, then $\L$ is \textit{positive semi-definite} (PSD) \cite{cheung18}, and smoothness of $\x$ with respect to (w.r.t.) $\cG$ can be measured using $\x^{\top}\L\x \geq 0$, called the \textit{graph Laplacian regularizer} (GLR) in the GSP literature~\cite{shuman2013}. 
Extending GLR, we recently proposed a new regularization term called \textit{feature graph Laplacian regularizer} (FGLR) \cite{dinesh2020}---$\mathbf{h}(\mathbf{x})^{\top} \mathbf{L} \, \mathbf{h}(\mathbf{x})$---where $\mathbf{h}(\mathbf{x}) \in \mathbb{R}^N$ is a vector function of $\mathbf{x}$ that computes a \textit{feature vector}, which is assumed to be smooth w.r.t. the graph characterized by Laplacian $\L$. 
FGLR can be used for PC restoration, where $\mathbf{h}(\mathbf{x})$ computes surface normals at the 3D points with coordinates specified by $\mathbf{x}$ (see~\cite{dinesh2020} for details).

\subsection{Balance of Signed Graph}

A \textit{signed graph} is a graph with both positive and negative edge weights.
The concept of balance in a signed graph was used in many scientific disciplines, such as psychology, social networks and data mining~\cite{leskovec2010}. 
In this paper, we adopt the following definition of a \textit{balanced signed graph}~\cite{easley2010networks, yang2020}:

\begin{definition}
A signed graph $\cG$ is balanced if $\cG$ does not contain any cycle with odd number of negative edges. 
\end{definition}

For intuition, consider a 3-node graph $\cG$, shown in Fig.\;\ref{fig:3nodes}. 
Given the edge weight assignment in  Fig.\;\ref{fig:3nodes}(left), we see that this graph is balanced; the only cycle has an even number of negative edges (two).
Note that nodes can be grouped into \textit{two} clusters---red cluster $\{i,k\}$ and blue cluster $\{j\}$---where same-color node pairs are connected by positive edges, and different-color node pairs are connected by negative edges.

In contrast, the edge weight assignment in Fig.\;\ref{fig:3nodes}(right) produces a cycle of odd number of negative edges (one). 
This graph is not balanced.
In this case, nodes cannot be assigned to two colors such that positive edges connect same-color node pairs, and negative edges connect different-color node pairs. 
Generalizing from this example, we state the well-known \textit{Cartwright-Harary Theorem} (CHT)~\cite{easley2010networks} as follows.

\begin{theorem}
\label{theorem:CHT}
A given signed graph is balanced if and only if its nodes can be colored into red and blue such that a positive edge always connects nodes of the same color, and a negative edge always connects nodes of opposite colors. 
\end{theorem}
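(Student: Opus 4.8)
The plan is to prove the two implications of the biconditional separately. The direction ``$2$-colorable $\Rightarrow$ balanced'' is the easy one: assuming a valid red/blue coloring exists, I would take an arbitrary cycle $v_0, v_1, \ldots, v_k = v_0$ and traverse it edge by edge. By the coloring property, a positive edge preserves the color of the current node while a negative edge flips it (red $\leftrightarrow$ blue). Since the traversal returns to $v_0$ with its original color, the total number of color flips around the cycle must be even; but this count is exactly the number of negative edges on the cycle. Hence every cycle carries an even number of negative edges, which is precisely the definition of balance.

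For the converse ``balanced $\Rightarrow$ $2$-colorable'', I would first reduce to the case where $\cG$ is connected, coloring each connected component independently. Then I fix a spanning tree $T$ of $\cG$ and a root node $r$, assigning $r$ the color red. For every other node $x$, let $P_{r,x}$ denote the \emph{unique} path in $T$ from $r$ to $x$, and color $x$ red if $P_{r,x}$ contains an even number of negative edges and blue otherwise. Writing $c(x) \in \{0,1\}$ for the parity of negative edges on $P_{r,x}$ (with $0$ meaning red and $1$ meaning blue), the color is well-defined precisely because tree paths are unique, so no consistency argument is needed at this stage. By construction every tree edge already respects the coloring: a positive tree edge leaves the parity unchanged, so its endpoints share a color, and a negative tree edge flips the parity, so its endpoints differ.

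The crux is to verify that the \emph{non-tree} edges also respect the coloring, and this is where the balance hypothesis enters. Let $(u,v)$ be any edge of $\cG$ not in $T$. Adding $(u,v)$ to $T$ creates a unique cycle $C$ consisting of $(u,v)$ together with the tree path from $u$ to $v$. A short parity computation through the least common ancestor of $u$ and $v$ shows that the number of negative edges on this tree path has parity $c(u) \oplus c(v)$. Since $\cG$ is balanced, $C$ has an even number of negative edges, so $c(u) \oplus c(v)$ plus the indicator that $(u,v)$ is negative is $\equiv 0 \pmod 2$. This forces $c(u) \oplus c(v) = 1$ exactly when $(u,v)$ is negative, i.e., negative non-tree edges join opposite colors and positive non-tree edges join equal colors, as required. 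I expect this verification to be the main obstacle, since it is the step that translates the global ``no odd negative cycle'' condition into a local constraint on each fundamental cycle; the key insight making it tractable is that checking only the $|\mathcal{E}| - |\mathcal{V}| + 1$ tree-plus-one-edge cycles suffices, because these generate the cycle space and the parity of negative edges is additive modulo $2$.
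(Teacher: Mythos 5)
Your proof is correct, but there is in fact nothing in the paper to compare it against: the paper states Theorem~\ref{theorem:CHT} as the well-known Cartwright--Harary Theorem, cites it to \cite{easley2010networks}, and offers only the three-node example of Fig.~\ref{fig:3nodes} as intuition---no proof appears in the text or the supplement. What you have written is the standard complete argument, essentially the breadth-first/spanning-tree labeling proof found in the cited reference. Both halves check out: the easy direction correctly counts color flips around a cycle (each negative edge flips the color, each positive edge preserves it, and returning to the start forces an even flip count, i.e., an even number of negative edges), and the converse correctly reduces to connected components, colors each node by the parity of negative edges along its unique tree path from the root, and then invokes balance only on the fundamental cycle closed by each non-tree edge; your least-common-ancestor computation, giving $\mathrm{parity}(u \rightarrow w \rightarrow v) = c(u) \oplus c(v)$ because the shared segment from the root to $w$ cancels modulo $2$, is right. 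Two minor remarks. First, your closing observation that the fundamental cycles generate the cycle space is unnecessary for the direction in which you use it: balance is assumed for \emph{all} cycles, and each fundamental cycle is itself a simple cycle, so the hypothesis applies to it directly; cycle-space generation would matter only for the reverse reduction (certifying balance of every cycle by inspecting the $|\mathcal{E}|-|\mathcal{V}|+1$ fundamental ones), which incidentally is what underlies the paper's remark that balance can be checked in linear time \cite{yang2020}. Second, the graphs in this paper may carry self-loops (the set $\cU$), which your argument---like the theorem statement itself---tacitly excludes from the cycles under consideration; this is harmless, but worth making explicit if the proof were to be incorporated into the text.
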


Thus, to determine if a given graph $\cG$ is balanced, instead of examining all cycles in $\cG$ and checking if each contains an odd or even number of negative edges (exponential time in the worst case), we can check if nodes can be colored into blue and red with consistent edge signs, as stated in Theorem~\ref{theorem:CHT}, which can be done in linear time~\cite{yang2020}. 
In this paper, we use Theorem~\ref{theorem:CHT} to design an algorithm to augment edges in a graph $\cG$ until the resulting graph becomes balanced.

\begin{figure}[t]
\centering
\includegraphics[width=0.43\textwidth]{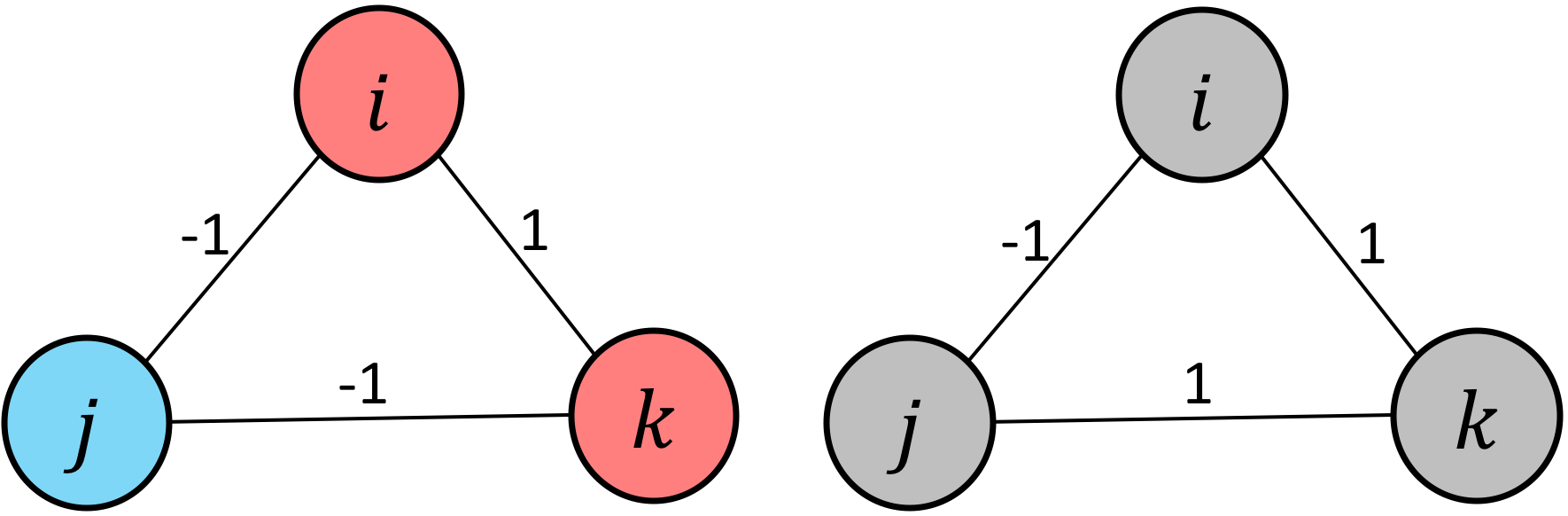}
\vspace{-6pt}
\caption{Examples of three-nodes signed graphs: balanced (left) and unbalanced (right)}
\label{fig:3nodes}
\vspace{-10pt}
\end{figure}

\section{Point Cloud Super-Resolution}
\label{sec:SR}
To mathematically define an objective for PC sub-sampling, we first assume that the sub-sampled PC will be subsequently \textit{super-resolved} from $m$ 3D point samples back to $n$ points, where $m < n$, via a simple PC super-resolution (SR) method (a simplified variant of \cite{dinesh19}). We chose this SR method because the resulting SR problem is an unconstrained quadratic programming (QP) optimization \eqref{eq:obj1}, with a system of linear equations as solution. 
Given this solution, we then select a sampling matrix~\cite{wang18} $\H$ to minimize a worst-case SR reconstruction error. 
Note that though we use a specific SR method to define an objective for PC sub-sampling, we will demonstrate in Section\;\ref{sec:exp} that our sub-sampling algorithm performs significantly better than competitors when combined with alternative SR methods~\cite{guennebaud2008, huang2013, dinesh2019, dinesh2020ICASSP} as well.

Since the SR method described here is graph-based, we first construct a graph given a PC as input.
We then define a fidelity term and a graph signal prior, so that a QP optimization problem can be formulated.

\subsection{Graph Construction for 3D Point Cloud}
\label{sec:graph_cons}

To enable graph-based processing of $n$ 3D points, we first construct an $n$-node graph.
For simplicity, we assume here that the original $n$ 3D points are used to construct an $n$-node graph. 
In practice, to construct an $n$-node graph from $m$ 3D point samples, $n-m$ interior 3D points can be interpolated from $m$ samples and then subsequently refined, as done in~\cite{dinesh19,dinesh2020ICASSP}. In particular, we construct an undirected positive graph $\mathcal{G}_p=(\mathcal{V}_{p}, \mathcal{E}_{p})$ composed of a node set $\mathcal{V}_{p}$ of size $n$ (each node represents a 3D point) and an edge set $\mathcal{E}_{p}$ specified by $(i,j,w_{i,j})$, where $i \neq j$,  $i,j\in\mathcal{V}$ and $w_{i,j}\in\mathbb{R}^{+}$ with no self-loops. 
We connect each 3D point (graph node) to its $k$ nearest neighbors $j$ in Euclidean distance, so that each point can be filtered with its $k$ neighboring points under a graph-structured data kernel \cite{ortega18ieee,cheung18}. 

Edge weight $w_{ij}$ between nodes $i$ and $j$ is computed based on pairwise Euclidean distance and surface normal difference, \ie,
\begin{equation}
w_{ij}=
 \exp \left\{-\frac{\norm{\mathbf{p}_{i}-\mathbf{p}_{j}}_{2}^{2}}{\sigma_{p}^{2}}-\frac{\norm{\mathbf{n}_{i}-\mathbf{n}_{j}}_{2}^{2}}{\sigma_{n}^{2}}\right\} 
\label{eq:edge_weight}
\vspace{-5pt}
\end{equation}
where $\mathbf{p}_{i}\in \mathbb{R}^{3}$ and $\mathbf{n}_{i}\in \mathbb{R}^{3}$ are the 3D coordinate and the unit-norm surface normal of point $i$, respectively, and  $\sigma_{p}$ and $\sigma_{n}$ are parameters.
In words, \eqref{eq:edge_weight} states that edge weight $w_{ij}$ is large (close to 1) if 3D points $i, j$ are close in distance and associated normal vectors are similar, and small (close to 0) otherwise. 
The weight definition \eqref{eq:edge_weight} is similar to \textit{bilateral filter} weights \cite{tomasi98} with domain and range filters. 
We have demonstrated in our previous work~\cite{dinesh2020} experimentally and mathematically that weight definition \eqref{eq:edge_weight} leads to 
FGLR that promotes piecewise smooth geometry reconstruction in 3D PC.

\subsection{Fidelity Term}
Denote by $\mathbf{p}~=~\begin{bmatrix} \mathbf{p}_{1}^{\top} & \hdots & \mathbf{p}_{{n}}^{\top} \end{bmatrix}^\top\in\mathbb{R}^{3n}$ the position vector of the super-resolved $n$ 3D points in the PC, where $\mathbf{p}_{i}\in\mathbb{R}^{3}$ is the \mbox{$x$-,} $y$-, and $z$  coordinate of point $i$.
Denote by $\H \in \{0, 1\}^{3m \times 3n}$ the sampling matrix~\cite{wang18} that selects $m$ samples from $n$ available 3D points $\p$.
Because each time a 3D point $i$ is selected as sample, its $x$-, $y$- and $z$-coordinates must be extracted together. Hence, sampling matrix $\H$ can be written as $\H=\tilde{\H}\otimes\mathbf{I}_{3}$, where $\tilde{\H} \in \{0,1\}^{m \times n}$ is
\begin{equation}
   \tilde{\H}(i,j)=\begin{cases}
        1, \text{if} \hspace{4pt} i\text{-th selected point is}\hspace{4pt} \p_{j}\\
        0, \text{otherwise}.
    \end{cases}
\end{equation}
Here, $\mathbf{I}_3$ is the $3\times 3$ identity matrix and $\otimes$ is the Kronecker product.

We assume that the $m$ sub-sampled points $\q$ are not perfectly accurate, \eg, they were corrupted by zero-mean, independent and identically distributed (iid) Gaussian noise of variance $\sigma^2$, when SR is performed for PC recovery\footnote{In practice, PCs are constructed from inaccurate depth measurements in active sensors or stereo-matching using color image pairs, both of which result in noisy 3D points. See \cite{dinesh2020} for a detailed discussion.}. 
Thus, the formation model for the observed sub-sampled points $\q$ is defined as
\begin{equation}
 \q=\H\p+\e,    
\end{equation}
where $\e\in\mathbb{R}^{3m}$ is the additive noise.
Thus, given $\H$, to ensure that the reconstructed PC $\p$ is consistent with noise-corrupted samples $\q$, we write the fidelity term as
$\|\q - \H \p \|_2^2$.

\subsection{Signal Prior}
\label{sec:FGLR}

Denote by $\n = [\n_x^{\top} \; \n_y^{\top} \; \n_z^{\top} ]^{\top} \in \mathbb{R}^{3n}$ a vector of $x$\nobreakdash-, $y$-, and $z$-coordinates of the surface normals of the super-resolved PC. 
Note that $\mathbf{n}_{x}$, $\mathbf{n}_{y}$, and $\mathbf{n}_{z}$ are vector functions (or \textit{features}) of 3D point coordinates $\mathbf{p}$.
Hence, we can rewrite $\n(\p)~=~[\mathbf{n}_{x}^{\top}(\mathbf{p}) \hspace{4pt} \mathbf{n}_{y}^{\top}(\mathbf{p}) \hspace{4pt} \mathbf{n}_{z}^{\top}(\mathbf{p})]^{\top}$, as done in \cite{dinesh2020}.

As done in \cite{dinesh2018fast, dinesh2019, dinesh2020, dinesh2020ICASSP, dinesh2020ICIP}, we define \textit{smoothness in 3D geometry} in terms of surface normals at 3D points. 
Specifically, we assume that $x$-, $y$-, and $z$-coordinates of surface normals, $\n_{x}(\p)$, $\n_{y}(\p)$ and $\n_{z}(\p)$, are smooth w.r.t. the graph $\cG_{p}$ constructed in Section~\ref{sec:graph_cons}.  
Mathematically, we write a signal prior $\text{Pr}(\p)$ based on FGLR\footnote{There are other possible signal priors in the literature for PC processing.
See \cite{dinesh2020} for a comparison of FGLR against alternative priors.}~\cite{dinesh2020} to promote smoothness of $\mathbf{n}_x(\mathbf{p})$, $\mathbf{n}_y(\mathbf{p})$ and $\mathbf{n}_z(\mathbf{p})$ as follows:
\vspace{0pt}
\begin{equation}
\begin{aligned}
\text{Pr}(\mathbf{p})&=\mathbf{n}_x^{\top}(\mathbf{p})\mathbf{\tilde{L}}_{p}\mathbf{n}_{x}(\mathbf{p})+\mathbf{n}_y^{\top}(\mathbf{p})\mathbf{\tilde{L}}_{p}\mathbf{n}_y(\mathbf{p})+\mathbf{n}_z^{\top}(\mathbf{p})\mathbf{\tilde{L}}_{p}\mathbf{n}_{z}(\mathbf{p}),\\
&=\mathbf{\n}^{\top}({\mathbf{p}})\mathbf{L}_{p}\mathbf{\n}({\mathbf{p}}) 
\label{eq:main_FGLR}
\end{aligned}
\end{equation}
where $\mathbf{\tilde{L}}_{p}$ is the PSD combinatorial graph Laplacian matrix for the constructed positive graph $\cG_p$
and $\mathbf{L}_{p}~=~\mathrm{diag}(\mathbf{\tilde{L}}_{p}, \mathbf{\tilde{L}}_{p}, \mathbf{\tilde{L}}_{p})$.
Note that $\L_{p}$ is PSD, given $\mathbf{\tilde{L}}_{p}$ is PSD.

\subsection{Optimization Formulation}

We formulate an optimization problem by combining the fidelity term and the signal prior~(\ref{eq:main_FGLR}) together:
\begin{align}
\min_{\p} \|\q - \H \p \|_2^2 + 
\mu \mathrm{Pr}(\p)
\label{eq:obj0}
\end{align}
where $\mu > 0$ is a parameter trading off the importance between the fidelity term and the signal prior.

Using state-of-the-art surface normal estimation methods~\cite{huang2001, klasing2009}, $\mathbf{n}_{x}(\mathbf{p})$, $\mathbf{n}_{y}(\mathbf{p})$, and $\mathbf{n}_{z}(\mathbf{p})$ are non-linear vector functions of the optimization variable $\mathbf{p}$ in~(\ref{eq:obj0}).
This non-linearity makes the optimization problem in (\ref{eq:obj0}) difficult to solve. 
However, using a linear approximation in \cite{dinesh2018fast,dinesh2020}, we can locally approximate $\n_{x}(\mathbf{p})$, $\n_{y}(\mathbf{p})$, and $\n_{z}(\mathbf{p})$ as linear functions of $\p$:
\begin{equation}
\small
\n_{x}(\mathbf{p}) \approx \A_{x} \p + \b_{x},\hspace{5pt} \n_{x}(\mathbf{p}) \approx \A_{x} \p + \b_{x},\hspace{5pt} \n_{x}(\mathbf{p}) \approx \A_{x} \p + \b_{x},
\label{eq:normalApprox}
\end{equation}
where $\A_{x}, \A_{y}, \A_{z} \in \mathbb{R}^{n \times 3n}$ and $\b_{x}, \b_{y}, \b_{z} \in \mathbb{R}^{n}$ are computed as done in \cite{dinesh2020}.
Note that by construction, matrices $\A_{x}$, $\A_{y}$, and $\A_{z}$ are sparse. 
Using \eqref{eq:normalApprox},  FGLR in \eqref{eq:main_FGLR} can be rewritten as
\begin{equation}
\begin{split}
 \text{Pr}(\p)&=(\A \p + \b)^{\top} \L_{p} (\A \p + \b)\\
 &=\p^{\top} \underbrace{\A^{\top} \L_{p} \A}_{\cL} \p + 2 \underbrace{\b^{\top} \L_{p} \A}_{\c^{\top}} \p + \b^{\top} \L_{p} \b
\end{split}
\label{eq:obj1}
\end{equation}
where $\A = [\A_{x}^{\top} \hspace{5pt} \A_y^{\top} \hspace{5pt} \A_z^{\top}]^{\top}$, and $\b = [\b_x^{\top} \hspace{5pt} \b_y^{\top} \hspace{5pt} \b_z^{\top}]^{\top}$.
Note that matrix $\cL$ is PSD, since $\L_{p}$ is PSD.

Since all terms in \eqref{eq:obj1} are quadratic, the optimization problem in \eqref{eq:obj0} is an unconstrained QP problem, and the solution $\p^*$ can be found by taking the derivative w.r.t. $\p$ and setting it to 0, resulting in
\begin{align}
\underbrace{\left( \H^{\top} \H + \mu \cL \right)}_{\B} \p^* =
\H^{\top} \q + \mu \c.
\label{eq:sysmLin}
\end{align}
Equation~\eqref{eq:sysmLin} is a system of linear equations with a well-defined unique solution if $\B$ is \textit{positive definite} (PD). 
Since both $\H^{\top} \H$ and $\cL$ are PSD, by Weyl's Theorem~\cite{matrixAnalysis} $\B$ is also PSD, hence $\lambda_{\min}(\B) \geq 0$.
Since the optimization objective to be discussed in Section\;\ref{sec:sampling_obj} is to maximize $\lambda_{\min}(\B)$ via the choice of sampling matrix $\H$, for a sufficient number of samples in $\H$, $\lambda_{\min}(\B) > 0$ and $\B$ is PD, and thus \eqref{eq:sysmLin} has a unique solution. 

The stability of the linear system \eqref{eq:sysmLin}, however, depends on the \textit{condition number}: 
the ratio of the largest to smallest eigenvalues of the coefficient matrix $\B$, \ie, $\rho = \lambda_{\max}(\B) / \lambda_{\min}(\B)$. A smaller condition number translates to better numerical stability~\cite{el2002}. Thus,
we begin our discussion on how we select sampling matrix $\H$ to minimize $\rho$ next.
\vspace{-6pt}
\section{Point Cloud Sub-Sampling}
\label{sec:Sampling}
\subsection{Objective Definition}
\label{sec:sampling_obj}

Our sub-sampling objective is to minimize the condition number $\rho~=~\lambda_{\max}(\B) / \lambda_{\min}(\B)$ of the coefficient matrix $\B$ in \eqref{eq:sysmLin} via selection of the sampling matrix $\H$. 
Because $\H^{\top} \H$ is a diagonal matrix with 1's and 0's along its diagonal, $\lambda_{\max}(\H^{\top} \H)~=~1$.
Thus, by Weyl's Theorem \cite{matrixAnalysis}, $\lambda_{\max}(\H^{\top} \H + \mu \cL) \leq \lambda_{\max}(\H^{\top} \H) + \mu \lambda_{\max}(\cL)~=~1 + \mu \lambda_{\max}(\cL)$. 
Moreover, \cite{dinesh2020} showed that $\lambda_{\max}(\cL)$ can be upper-bounded by a sufficiently small value. 
Hence, $\lambda_{\max}(\B)$ is also upper-bounded by a small value unaffected by $\H$.

On the other hand, $\lambda_{\min}(\B)$ can be dangerously close to $0$ if $\H$ is not carefully chosen, resulting in a large $\rho$. 
In order to minimize $\rho$, we thus focus on maximizing $\lambda_{\min}(\B)$, \ie,
\begin{align}
\max_{\H \,|\, \mathrm{Tr}(\H^{\top}\H) \leq 3m} \lambda_{\min}(\B)
\label{eq:obj}
\vspace{-5pt}
\end{align}
where $\mathrm{Tr}(\F)$ is the trace of matrix $\F$. 
Similar to \cite{bai2020}, we can show that maximizing $\lambda_{\min}(\B)$ in  \eqref{eq:obj} is equivalent to minimizing a worst-case bound of \textit{mean square error} (MSE) between reconstructed point coordinates and original point coordinates.

\begin{proposition}
\label{pro:MMSE}
Assuming that $\cL$ and $\c$ in \eqref{eq:sysmLin} are  known during SR reconstruction\footnote{In practice, graph $\cG_{p}$ can be constructed as mentioned in Section~\ref{sec:graph_cons}. Hence $\cL$ and $\c$ can be computed approximately.},
maximizing $\lambda_{\min}(\B)$ in \eqref{eq:obj} minimizes a worst-case bound of MSE between reconstructed point coordinates $\mathbf{p}^{*}$ and original point coordinates $\p$.  
\end{proposition}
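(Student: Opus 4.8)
The plan is to derive a closed form for the reconstruction error $\p^{*}-\p$ and then bound its size by a quantity that depends on $\H$ only through $\lambda_{\min}(\B)$. Starting from the normal equations \eqref{eq:sysmLin} I write $\B\,\p^{*}=\H^{\top}\q+\mu\c$, and from the formation model $\q=\H\p+\e$ I substitute to obtain $\B\,\p^{*}=\H^{\top}\H\,\p+\H^{\top}\e+\mu\c$. Subtracting $\B\,\p=\H^{\top}\H\,\p+\mu\cL\,\p$ isolates the error as a single linear solve,
\begin{equation}
\B\left(\p^{*}-\p\right)=\H^{\top}\e+\mu\left(\c-\cL\,\p\right)\;\triangleq\;\r,
\end{equation}
so that $\p^{*}-\p=\B^{-1}\r$. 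The forcing term $\r$ collects a stochastic part $\H^{\top}\e$ driven by the acquisition noise and a deterministic part $\mu(\c-\cL\,\p)$ whose norm does not scale with the conditioning of $\B$.

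Since $\B$ is symmetric and (for a sufficient number of samples) positive definite, its inverse has spectral norm $\|\B^{-1}\|_{2}=1/\lambda_{\min}(\B)$. The key observation is that the forcing can be controlled independently of the sampling choice: because $\H^{\top}\H$ is diagonal with entries in $\{0,1\}$ we have $\|\H^{\top}\|_{2}=1$, hence $\|\H^{\top}\e\|_{2}\le\|\e\|_{2}$. Taking expectation over the iid zero-mean noise, $\rE[\e\e^{\top}]=\sigma^{2}\I$, and using that the cross term vanishes, the MSE decomposes as
\begin{equation}
\rE\,\|\p^{*}-\p\|_{2}^{2}=\sigma^{2}\,\mathrm{Tr}\!\left(\B^{-1}\H^{\top}\H\,\B^{-1}\right)+\mu^{2}\,\|\B^{-1}(\c-\cL\,\p)\|_{2}^{2}.
\end{equation}
I would then bound each term by $1/\lambda_{\min}(\B)^{2}$: the first via $\B^{-1}\H^{\top}\H\,\B^{-1}\preceq\B^{-2}$ (congruence preserves the order, since $\H^{\top}\H\preceq\I$) together with $\mathrm{Tr}(\B^{-2})=\sum_{k}\lambda_{k}(\B)^{-2}\le 3n\,\lambda_{\min}(\B)^{-2}$, and the second directly via $\|\B^{-1}\|_{2}=1/\lambda_{\min}(\B)$. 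This yields a worst-case bound
\begin{equation}
\rE\,\|\p^{*}-\p\|_{2}^{2}\le\frac{3n\,\sigma^{2}+\mu^{2}\,\|\c-\cL\,\p\|_{2}^{2}}{\lambda_{\min}(\B)^{2}},
\end{equation}
whose numerator is independent of $\H$.

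The conclusion then follows because this bound is a strictly decreasing function of $\lambda_{\min}(\B)$, so maximizing $\lambda_{\min}(\B)$ over feasible $\H$ in \eqref{eq:obj} minimizes the worst-case MSE bound, as claimed. The main obstacle I anticipate is the bookkeeping around the deterministic term $\mu(\c-\cL\,\p)$: one must argue that the dependence on $\H$ is confined entirely to $\B^{-1}$ (equivalently, to $\lambda_{\min}(\B)$) and that the numerator constant is a fixed, $\H$-independent quantity under the stated assumption that $\cL$ and $\c$ are known at reconstruction time. If instead one models $\p$ as the noise-free reconstruction so that the deterministic residual drops out, only the noise trace $\sigma^{2}\,\mathrm{Tr}(\B^{-1}\H^{\top}\H\,\B^{-1})$ survives, and the monotone dependence on $\lambda_{\min}(\B)$ becomes even more transparent.
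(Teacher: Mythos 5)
Your proof is correct and takes essentially the same route as the paper's (deferred to Appendix~A and modeled on the argument in \cite{bai2020}): isolate the error via the normal equations, $\B\left(\p^{*}-\p\right)=\H^{\top}\e+\mu\left(\c-\cL\,\p\right)$, then exploit $\norm{\B^{-1}}_{2}=1/\lambda_{\min}(\B)$ so that an $\H$-independent numerator makes the worst-case bound monotonically decreasing in $\lambda_{\min}(\B)$. The only difference is presentational---the paper states a deterministic worst-case norm bound of the form $\norm{\p^{*}-\p}_{2}\leq\left(\norm{\e}_{2}+\mu\norm{\c-\cL\,\p}_{2}\right)/\lambda_{\min}(\B)$, whereas you additionally take expectation over the iid noise and bound the resulting trace term, a valid refinement resting on the same key identity.
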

The proof of Proposition~\ref{pro:MMSE} is reported in Appendix A in the supplementary material.

\subsection{GDAS for Positive Graphs}
\label{subsec:GDAS}

We devise a graph sampling strategy to maximize objective \eqref{eq:obj}. 
To impart intuition, consider first the following simple case, from which we will generalize later. 
Assume first that $\cL$ is a combinatorial graph Laplacian matrix $\cL~=~\D - \W$, where $\D$ and $\W$ are the degree and adjacency matrices, respectively, for a weighted \textit{positive} graph $\cG$ without self-loops. 
In this case, all Gershgorin disc left-ends of $\cL$ are aligned at $0$, \ie, $c_i - r_i~=~\D(i,i) - \sum_{j} \W(i,j)~=~0~=~\lambda_{\min}(\cL), \forall i$. 
Thus, GDAS~\cite{yuanchao2019ICASSP,bai2020} can be employed to approximately solve \eqref{eq:obj} with roughly linear-time complexity. 

In a nutshell, GDAS maximizes the \textit{smallest eigenvalue lower bound} $\lambda_{\min}^-(\C)$ of a similar-transformed matrix $\C~=~\S \B \S^{-1}$ of $\B$, \ie, 
\begin{align}
\lambda_{\min}^-(\C) \leq \lambda_{\min}(\C) = \lambda_{\min}(\B),
\end{align}
since $\B$ and similar-transformed $\C$ have the same eigenvalues~\cite{varga2010gervsgorin}.
$\S = \text{diag}(s_1, \ldots, s_{3n})$ is a diagonal matrix with scalars $s_i \neq 0$ on its diagonal (thus $\S^{-1}$ is well defined).
By GCT \cite{varga04}, $\lambda_{\min}^-(\C) = \min_i (c_i - r_i)$ is the smallest Gershgorin disc left-end of matrix $\C$. 
Given a target smallest eigenvalue lower bound $T>0$, GDAS (roughly) realigns \textit{all} disc left-ends of $\C$ from $0$ to $T$ via two operations that are performed in a \textit{breadth-first search} (BFS) manner, expending as few samples $K$ as possible in the process.
Because $T$ and $K$ are proportional, if $K$ is larger than sample budget $m$, then $T$ must be decreased until $K \leq m$.
The largest target $T$ such that $K \leq m$ can be located efficiently via binary search.

The two basic disc operations for a given target $T$ are:
\begin{enumerate}
\item \textit{Disc Shifting}: select node $i$ in graph $\cG$ as sample, meaning that the $i$-th diagonal entry of $\H^{\top} \H$ becomes 1 and, as a result, shifts disc center $c_i$ of row $i$ of $\C$ to the right by $1$. 

\item \textit{Disc Scaling}: select scalar $s_i > 1$ in $\S$ to increase disc radius $r_i$ of row $i$ corresponding to sampled node $i$, thus decreasing disc radii $r_j$ of neighboring nodes $j$ due to $\S^{-1}$, and moving disc left-ends of nodes $j$ to the right. 
\end{enumerate}

We demonstrate these two disc operations in GDAS using an example from \cite{yuanchao2019ICASSP} as follows. 
Consider a 4-node line graph, shown in Fig.\;\ref{fig:linegraph}. Suppose we first sample node 3.  
If $\mu=1$, then the coefficient matrix $\B$ is as shown in Fig.\;\ref{fig:GDA_sampling}a, after $(3,3)$ entry is updated. Correspondingly, Gershgorin disc $\psi_3$'s left-end of row 3 moves from $0$ to $1$, as shown in Fig.\;\ref{fig:GDA_sampling}d (red dots and blue arrows represent disc centers and radii, respectively).

Next, we scale disc $\psi_3$ of sampled node $3$:~ apply scalar $s_{3} > 1$ to row 3 of $\B$ to move disc $\psi_3$'s left-end left to bound $T$ exactly, as shown in Fig.\;\ref{fig:GDA_sampling}b. 
Concurrently, scalar $1/s_{3}$ is applied to the third column, and thus the disc radii of $\psi_2$ and $\psi_4$ are reduced due to scaling of $w_{2,3}$ and $w_{4,3}$ by $1/s_{3}$. 
Note that diagonal entry $(3,3)$ of $\B$ (\ie,  $\psi_{3}$'s disc center) is unchanged, since the effect of $s_{3}$ is offset by $1/s_{3}$. 
We now see that by expanding disc $\psi_3$'s radius, the disc left-ends of its neighbors (\ie, $\psi_2$ and $\psi_4$) move right beyond bound $T$ as shown in Fig.\;\ref{fig:GDA_sampling}e.

Subsequently, scalar $s_2 > 1$ for disc $\psi_{2}$ can be applied to expand its radius, so that $\psi_2$'s left-end moves to threshold $T$.
Again, this has the effect of shrinking the disc radius of neighboring node $1$, pulling its left-end right beyond $T$, as shown in Fig.\;\ref{fig:GDA_sampling}c and Fig.~\ref{fig:GDA_sampling}f.
At this point, we see that left-ends of all Gershgorin discs have moved beyond threshold $T$, and thus we conclude that by sampling a single node $2$, smallest eigenvalue  $\lambda_{\min}$ of $\B$ is lower-bounded by $T$. 
See~\cite{yuanchao2019ICASSP, bai2020} for further details.

\begin{figure}[t]
\centering
\includegraphics[width=0.38\textwidth]{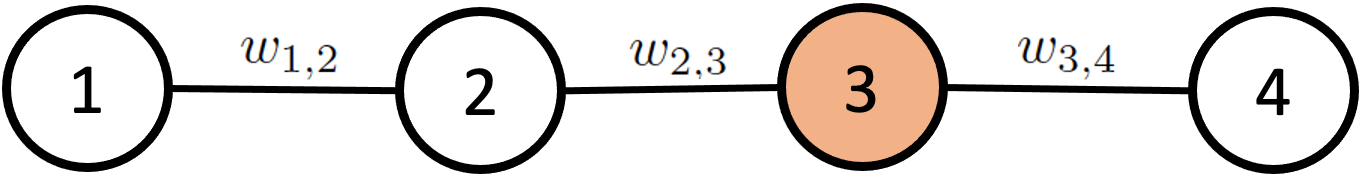}
\vspace{-7pt}
\caption{An illustrative example of a 4-node line graph from~\cite{yuanchao2019ICASSP}.}
\label{fig:linegraph}
\vspace{-12pt}
\end{figure}

\begin{figure}[t]
\centering
\subfloat[]{
\includegraphics[width=0.159\textwidth]{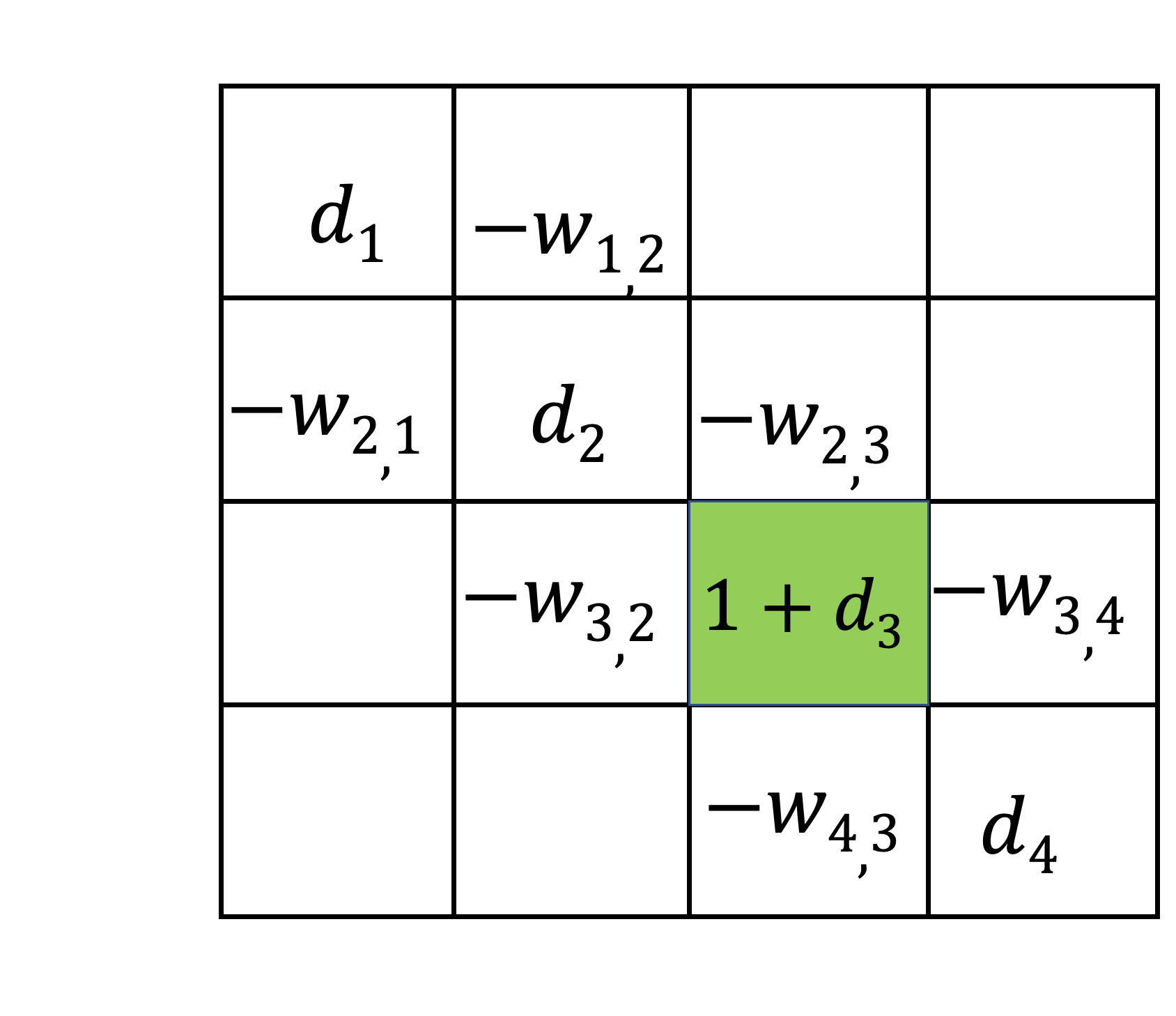}
\label{fig:pcd-on}}
\hspace{-8pt}
\subfloat[{}]{
\includegraphics[width=0.159\textwidth]{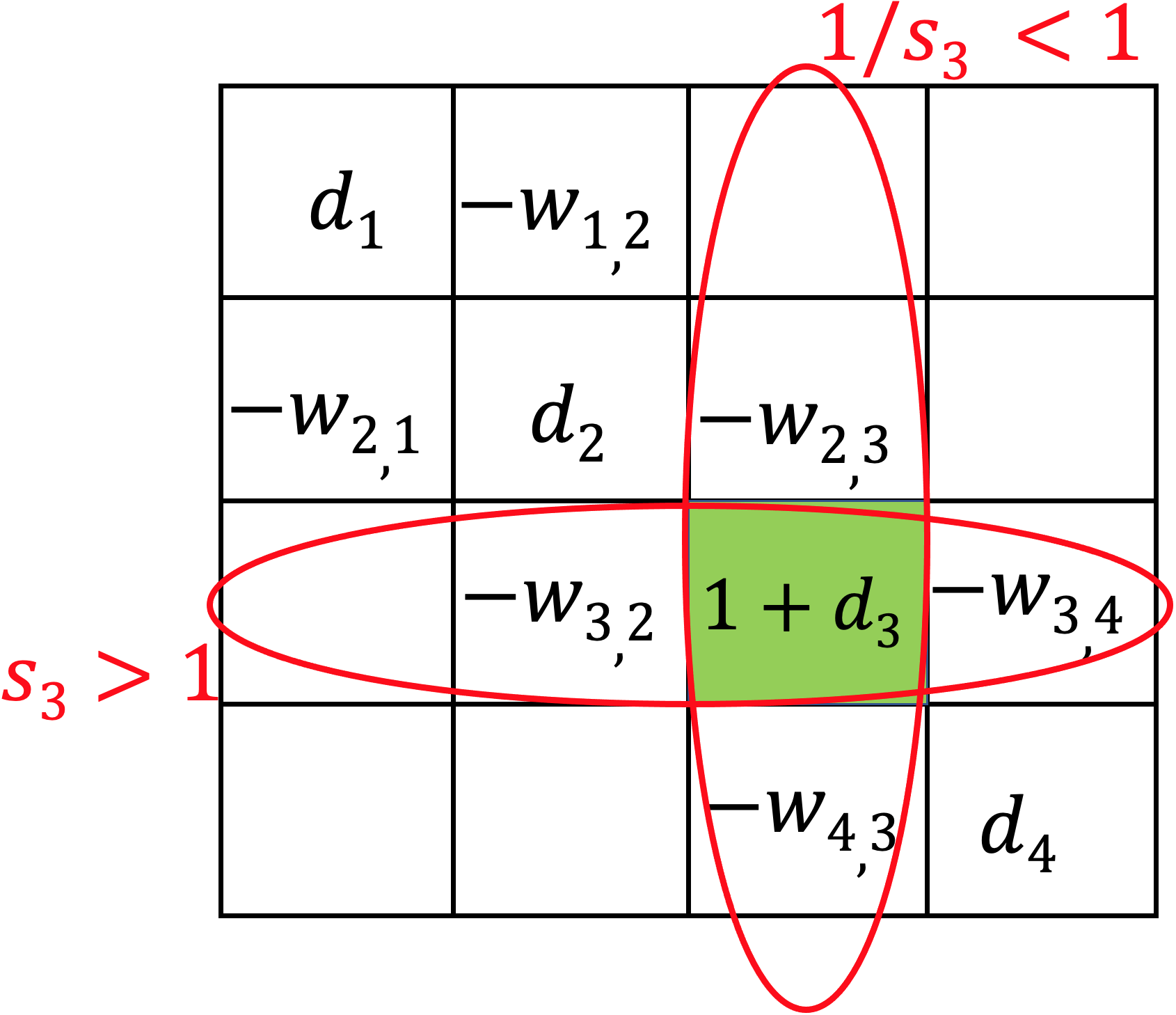}
\label{fig:pcd-off}}
\hspace{-8pt}
\subfloat[{}]{
\includegraphics[width=0.159\textwidth]{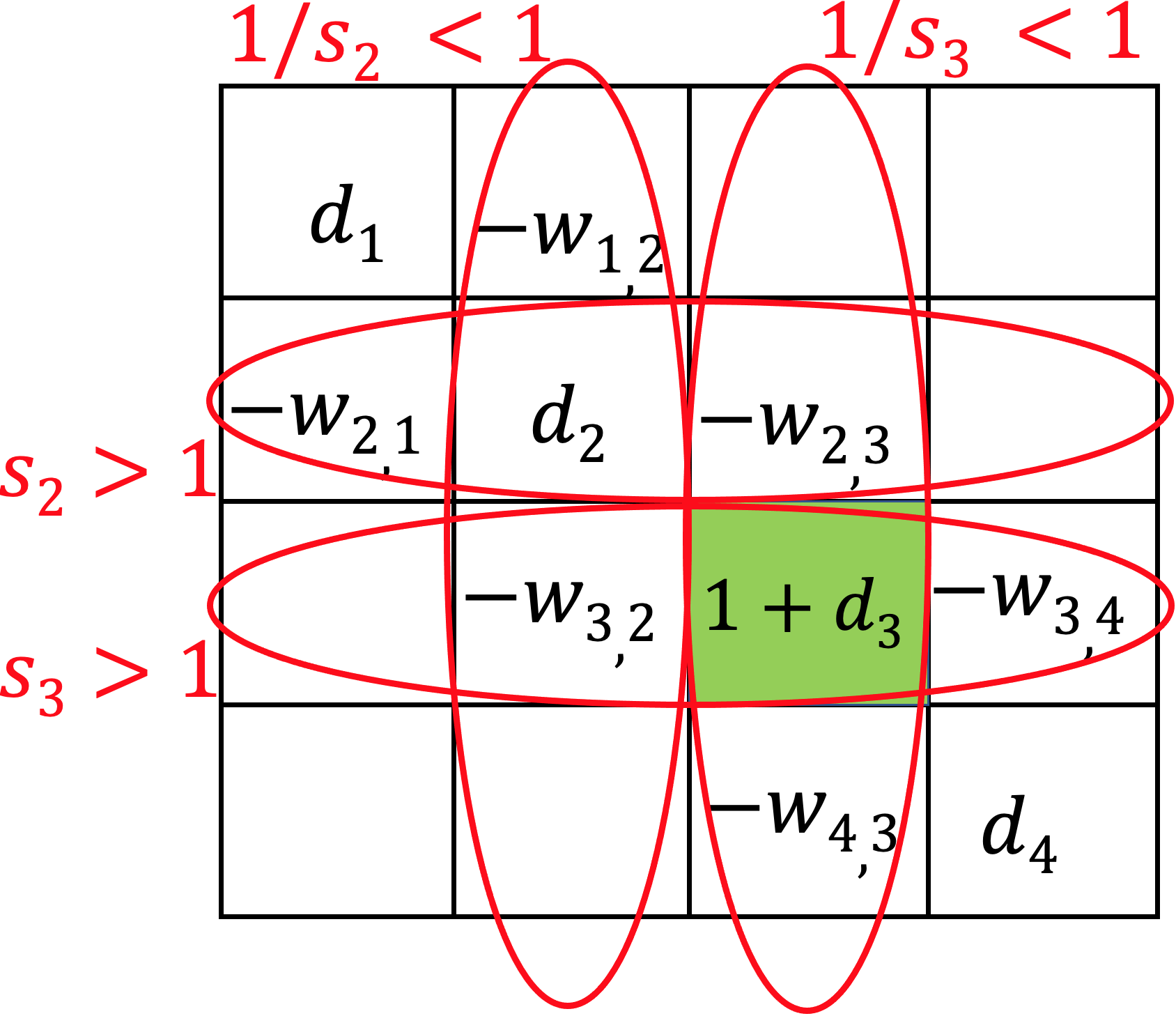}
\label{fig:pcd-off}}
\hspace{-8pt}\\
\vspace{-10pt}
\subfloat[]{
\includegraphics[width=0.149\textwidth]{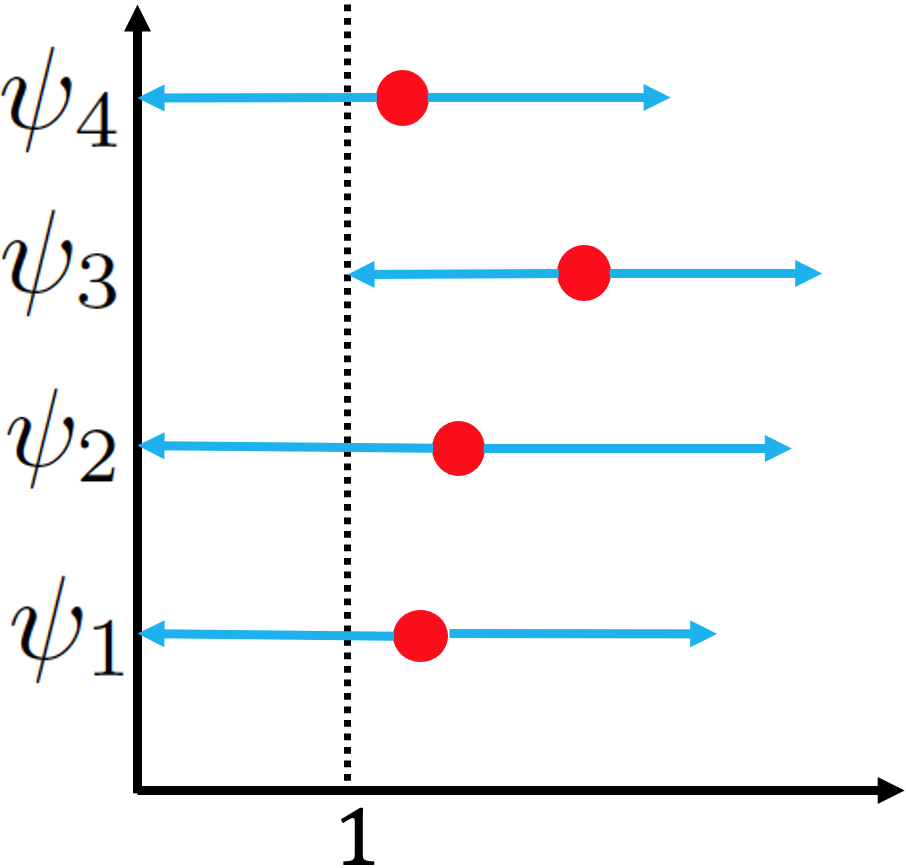}
\label{fig:pcd-on}}
\hspace{-8pt}
\subfloat[{}]{
\includegraphics[width=0.149\textwidth]{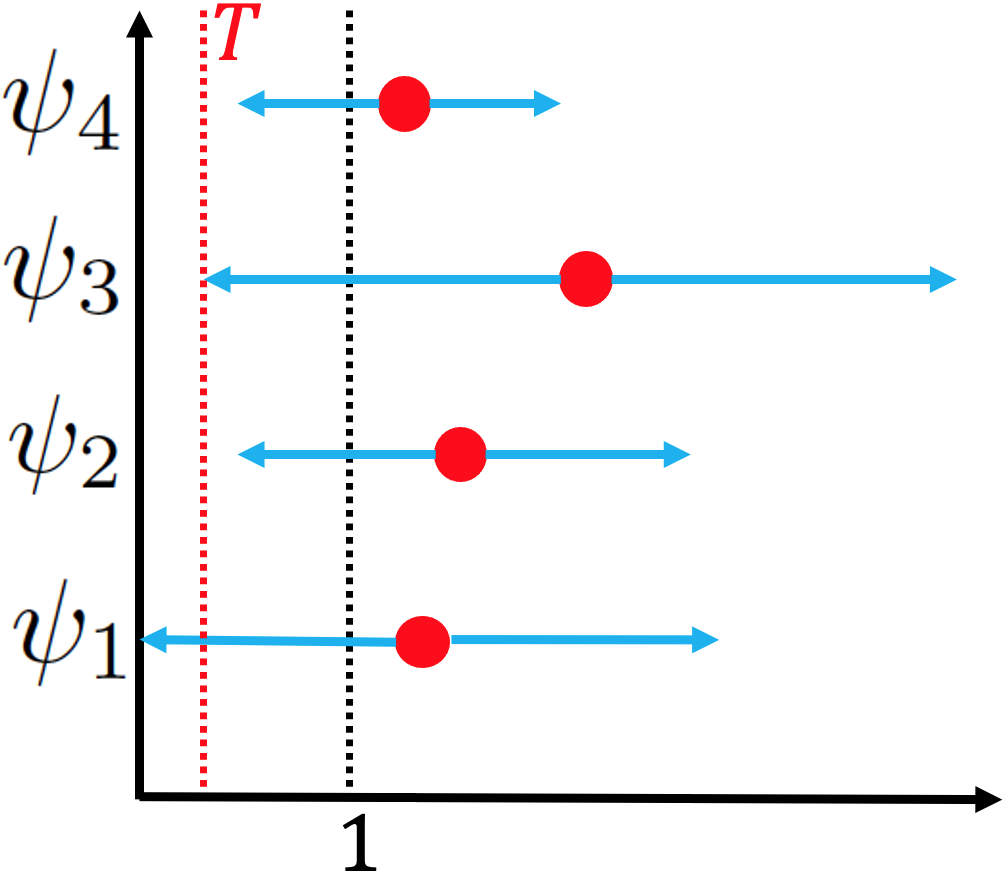}
\label{fig:pcd-off}}
\hspace{-8pt}
\subfloat[{}]{
\includegraphics[width=0.149\textwidth]{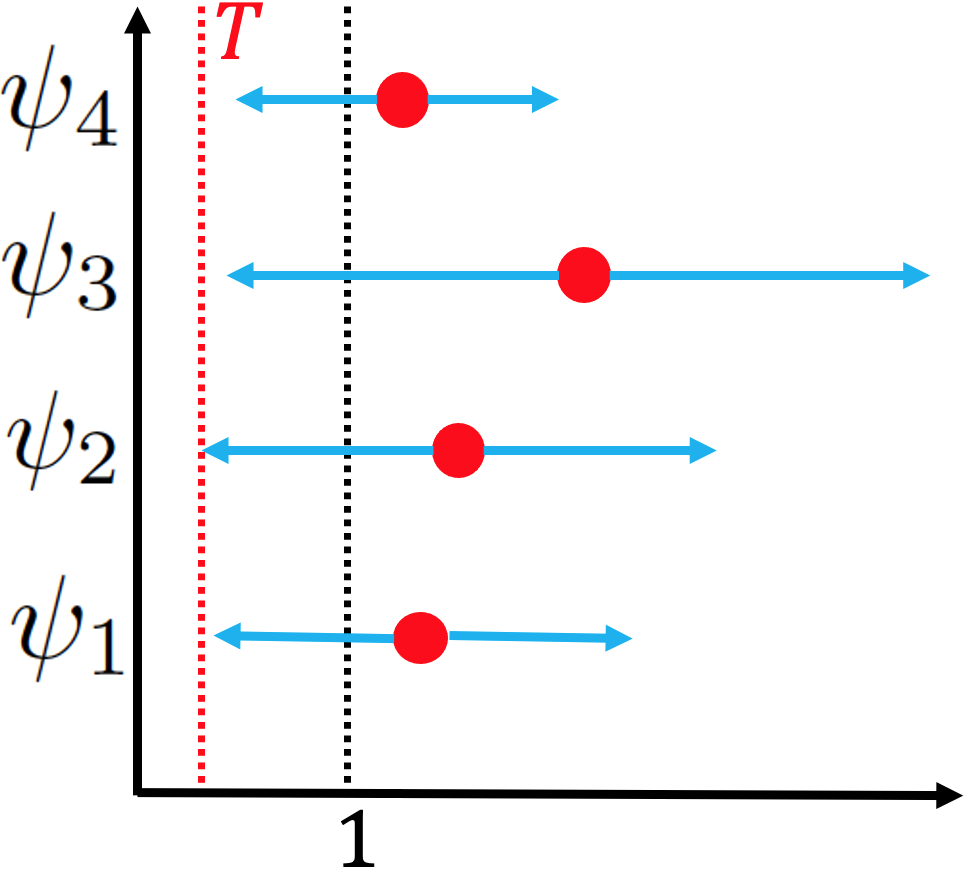}
\label{fig:pcd-off}}
\caption{An illustration of GDAS from~\cite{yuanchao2019ICASSP}. (a) sampling node 3. (b) scaling node 3. (c) scaling nodes 2 and 3. (d) Discs after sampling node 3. (e) Discs after scaling node 3. (f) Discs after scaling nodes 2 and 3.} 
\label{fig:GDA_sampling}
\vspace{-14pt}
\end{figure}

\subsection{GDAS for Signed Graphs}
\label{sec:GDA_signgraph}

In the previous example of GDAS, disc left-ends are initially aligned at the same value $0$, then via a sequence of disc shifting and scaling operations, become (roughly) aligned at target $T$.
In the general case, $\cL$ in \eqref{eq:obj} is not a combinatorial graph Laplacian for a weighted positive graph without self-loops, but is any generic symmetric real PSD matrix.
We can interpret $\cL$ as a \textit{generalized graph Laplacian} matrix corresponding to an irreducible\footnote{An irreducible graph $\cG$ means that there exists a path from any node in $\cG$ to any other node in $\cG$~\cite{milgram1972}.} signed graph $\cG$ with self-loops\footnote{Given a generalized graph Laplacian $\cL$, one can compute the corresponding combinatorial graph Laplacian matrix $\L$ by removing self-loops so that $\L(i,i)~=~-\sum_{j\neq i}\cL(i,j)$ and $\L(i,j)=\cL(i,j)$.}. 
This means that the Gershgorin disc left-ends of $\cL$ are not initially aligned at the same value, and hence GDAS as discussed cannot be used directly. 

However, it has been proven recently in a theorem in \cite{yang2020} (called \textit{Gershgorin disc perfect alignment} (GDPA) in the sequel) that Gershgorin disc left-ends of a generalized graph Laplacian matrix $\cL_{B}$ corresponding to an irreducible, \textit{balanced}, signed graph $\cG_{B}$ can be aligned \textit{exactly} at $\lambda_{\min}(\cL_{B})$ via similarity transform $\S \cL_{B} \S^{-1}$, where $\S = \text{diag}(1/v_1, \ldots, 1/v_{3n})$, and $\v =[v_1\ldots v_{3n}]^{\top}$ is the first eigenvector of $\cL_{B}$ corresponding to the smallest eigenvalue $\lambda_{\min}(\cL_{B})$. First eigenvector $\v$ of $\cL_B$ can be computed efficiently using known fast algorithms, such as \textit{Locally Optimal Block Preconditioned Conjugate Gradient} (LOBPCG)~\cite{knyazev2001}, in roughly linear time, since $\cL_B$ is sparse and symmetric.
Leveraging on GDPA, we perform the following three steps to solve \eqref{eq:obj} approximately. \\
\noindent
\textbf{Step 1:} 
Approximate $\cG = (\cV, \cE, \cU)$ with a balanced graph $\cG_{B}~=~(\mathcal{V}, \mathcal{E}_{B},\mathcal{U})$ while satisfying the following condition:
\begin{equation}
 \lambda_{\min}(\H^{\top}\H+\mu\cL_{B})\leq \lambda_{\min}(\H^{\top}\H+\mu\cL),
 \label{eq:cond1}
 \end{equation}
where $\cL_B$ is the generalized graph Laplacian matrix corresponding to $\cG_B$.\\
\noindent
\textbf{Step 2:} 
Given $\cL_B$, perform similarity transform $\cL_p~=~\S \cL_B \S^{-1}$, where $\S~=~\text{diag}(1/v_1, \ldots, 1/v_{3n})$ and $\v$ is the first eigenvector of $\cL_B$, so that disc left-ends of matrix $\cL_p$ are aligned exactly at  $\lambda_{\min}(\cL_p)~=~\lambda_{\min}(\cL_B)$. \\
\noindent
\textbf{Step 3}:
Perform GDAS~\cite{yuanchao2019ICASSP, bai2020} to maximize $\lambda^-_{\min}(\H^{\top} \H + \mu \cL_p) = \lambda^-_{\min}(\H^{\top} \H + \mu \cL_B)$. 

It is clear that we must obtain a balanced graph $\cG_{B}$ that induces a \textit{tight} lower bound $\lambda_{\min}(\H^{\top}\H+\mu\cL_{B})$ for $\lambda_{\min}(\H^{\top}\H+\mu\cL)$. 
Thus, in next two sections, we define an optimization objective for graph balancing and propose an optimization algorithm to maximize the defined objective.

\section{Graph Balancing Formulation}
\label{sec:balance}

We first show that if we obtain a balanced graph $\cG_{B}$ such that $\L-\L_{B}$ is a PSD matrix (\ie, $\L-\L_{B}\succeq 0$), then \eqref{eq:cond1} will be satisfied, where $\L$ and $\L_{B}$ are combinatorial graph Laplacian matrices of graphs $\cG$ and $\cG_{B}$ respectively.
We state this formally as follows.

\begin{proposition}
\label{prop:lowerbound}
Given two undirected graphs $\mathcal{G}~=~(\mathcal{V}, \mathcal{E},\mathcal{U})$ and $\mathcal{G}_{B}~=~(\mathcal{V}, \mathcal{E}_{B},\mathcal{U})$, if $\L-\L_{B}\succeq 0$, then \eqref{eq:cond1} is satisfied.
\end{proposition}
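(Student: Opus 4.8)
The plan is to transfer the eigenvalue inequality \eqref{eq:cond1}, which concerns the \emph{generalized} Laplacians $\cL$ and $\cL_B$, onto the hypothesis $\L-\L_B\succeq 0$, which concerns the \emph{combinatorial} Laplacians, and then to finish with a one-line monotonicity argument. First I would record how the two kinds of Laplacian relate. Recalling that a generalized graph Laplacian decomposes as its combinatorial counterpart plus the diagonal matrix of self-loop weights, and writing $\mathbf{U}=\mathrm{diag}(u_1,\ldots,u_{3n})$ for the self-loop weights attached to the common node set $\mathcal{V}$, we have $\cL=\L+\mathbf{U}$ and $\cL_B=\L_B+\mathbf{U}$.

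The heart of the proof is the observation that $\cG$ and $\cG_B$ share the \emph{same} self-loop set $\mathcal{U}$ and differ only through their edge sets $\mathcal{E}$ versus $\mathcal{E}_B$. Hence the diagonal term $\mathbf{U}$ is identical in both decompositions and cancels in the difference:
\[
\cL-\cL_B=(\L+\mathbf{U})-(\L_B+\mathbf{U})=\L-\L_B\succeq 0,
\]
the last inequality being exactly the hypothesis. Thus the Loewner ordering passes verbatim from the combinatorial to the generalized Laplacians, namely $\cL\succeq\cL_B$, and no further reasoning about the (more delicate) generalized Laplacians is required.

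The remaining steps are routine. Since $\mu>0$, scaling preserves the ordering, giving $\mu\cL\succeq\mu\cL_B$; adding the common term $\H^{\top}\H$ to both sides preserves it again, so $\H^{\top}\H+\mu\cL\succeq\H^{\top}\H+\mu\cL_B$. I would then invoke the Courant--Fischer characterization $\lambda_{\min}(\F)=\min_{\|\x\|=1}\x^{\top}\F\x$: since $\x^{\top}(\H^{\top}\H+\mu\cL)\x\geq\x^{\top}(\H^{\top}\H+\mu\cL_B)\x$ for every unit vector $\x$, minimizing each side over $\x$ yields $\lambda_{\min}(\H^{\top}\H+\mu\cL)\geq\lambda_{\min}(\H^{\top}\H+\mu\cL_B)$, which is \eqref{eq:cond1}. (Equivalently, this is Weyl's inequality, already used elsewhere in the paper.)

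As for the main difficulty: there is no serious obstacle, and the only point demanding care is the bookkeeping in the first two paragraphs. The entire argument rests on the cancellation $\cL-\cL_B=\L-\L_B$, which in turn rests on $\cG$ and $\cG_B$ carrying identical self-loops; the proposition is stated precisely so that this holds. Once that cancellation is secured, the PSD hypothesis on the combinatorial Laplacians is strong enough to deliver the eigenvalue bound directly.
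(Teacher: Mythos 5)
Your proof is correct and follows essentially the same route as the paper's: both cancel the common self-loop diagonal (the paper via $\mathrm{diag}(\W)=\mathrm{diag}(\W_B)$ in the quadratic form, you via $\cL-\cL_B=\L-\L_B$ in the Loewner order) and then conclude with the Courant--Fischer/min-max characterization of $\lambda_{\min}$. Your final step---pointwise ordering of quadratic forms implies ordering of their minima over the unit sphere---is the same argument the paper carries out by evaluating at the minimizing eigenvector $\u$ of $\H^{\top}\H+\mu\cL$, just stated slightly more compactly.
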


\begin{proof}
Since $\L-\L_{B}\succeq 0$, we can write
\begin{equation}
    \x^{\top}\L\x\geq\x^{\top}\L_{B}\x,
    \hspace{10pt} \forall \x\in\mathbb{R}^{3n} \hspace{10pt}
    \label{eq:eqlGLR}
\end{equation}
Moreover, since the two graphs have the same set of self-loops $\mathcal{U}$, $\mathrm{diag}(\W)~=~\mathrm{diag}(\W_{B})$, where $\W$ and $\W_{B}$ are adjacency matrices of graphs $\cG$ and $\cG_{B}$ respectively. 
Then, due to \eqref{eq:eqlGLR}, we can write that, $\forall \x\in\mathbb{R}^{3n}$,
\begin{equation}
\begin{split}
   \x^{\top}\underbrace{\left(\L+\mathrm{diag}(\W)\right)}_{\cL}\x\geq\x^{\top}\underbrace{\left(\L_{B}+\mathrm{diag}(\W_{B})\right)}_{\mathcal{L}_{B}}\x. 
\end{split} 
\label{eq:eqGGLR}
\end{equation}
Consequently,
\begin{equation}
\begin{split}
   \x^{\top}(\H^{\top}\H+\mu\cL)\x\geq\x^{\top}(\H^{\top}\H+\mu\cL_{B})\x. 
\end{split}
\label{eq:GGLR_H}
\end{equation}

Denote by $\u$ the unit-norm eigenvector corresponding to the smallest eigenvalue of $\H^{\top}\H+\mu\cL$. 
Hence, 
\begin{equation}
 \lambda_{\min}(\H^{\top}\H+\mu\cL)~=~\u^{\top}(\H^{\top}\H+\mu\cL)\u.  \label{eq:eigvalvec} 
\end{equation}
By the \textit{Min-max theorem} \cite{matrixAnalysis}, we can write
\begin{equation}
 \lambda_{\min}(\H^{\top}\H+\mu\cL_{B})=\min_{\x,\norm{\x}_{2}=1}\x^{\top}(\H^{\top}\H+\mu\cL_{B})\x. 
 \label{eq:minmax}
\end{equation}
By substituting $\x=\u$ in \eqref{eq:GGLR_H} and using~(\ref{eq:eigvalvec}) and~(\ref{eq:minmax}), we can write
\begin{equation}
\begin{split}
  \lambda_{\min}(\H^{\top}\H+\mu\cL)&\geq \u^{\top}(\H^{\top}\H+\mu\cL_{B})\u\\
  &\geq \min_{\x,\norm{\x}_{2}=1}\x^{\top}(\H^{\top}\H+\mu\cL_{B})\x\\
  &=\lambda_{\min}(\H^{\top}\H+\mu\cL_{B})
\end{split}
\end{equation}
which concludes the proof.
\end{proof}

From the proof, one can see that $\lambda_{\min}(\H^{\top}\H+\mu\cL_{B})$ forms a tight lower bound for $\lambda_{\min}(\H^{\top}\H+\mu\cL)$ when $\x^{\top}\L_{B}\x$ is a tight lower bound for $\x^{\top}\L\x$, $\forall \x\in\mathbb{R}^{3n}$. 
Thus, we define our objective for a balanced graph $\cG_{B}$ to maximize $\x^{\top}\L_{B}\x$ subject to $\L-\L_{B}\succeq 0$. 
Specifically, we 
model $\x$ 
as a zero-mean 
\textit{Gaussian Markov random field} (GMRF)~\cite{rue2005} 
over $\mathcal{G}$, 
\ie,  
$\mathbf{x} \sim \mathcal{N}(\boldsymbol{0},\mathbf{\Sigma})$, where $\mathbf{\Sigma}$ is the \textit{covariance matrix} and $\mathbf{\Sigma}^{-1} = \cL + \delta \I$ \cite{gadde2015}. 
The \textit{precision matrix} $\mathbf{\Sigma}^{-1}$ is so defined since $\cL$ is a PSD matrix, $\I$ is the identity matrix, and $\delta$ is chosen to be a very small positive number. Hence a proper covariance matrix $\mathbf{\Sigma}$ is induced by inverting $\cL + \delta \I$.

Next, we choose a balanced graph $\cG_{B}$ to maximize the expectation of $\x^{\top}\L_{B}\x$ subject to $\L-\L_{B} \succeq 0$:
\begin{equation}
\begin{split}
    \mathop{\mathbb{E}}[\x^{\top}\L_{B}\x]&=\mathop{\mathbb{E}}[\mathrm{Tr}(\x^{\top}\L_{B}\x)]\\
    &=\mathop{\mathbb{E}}[\mathrm{Tr}(\L_{B}\x\x^{\top})]\\
    &=\mathrm{Tr}(\L_{B}\mathop{\mathbb{E}}[\x\x^{\top}])=\mathrm{Tr}(\L_{B}\mathbf{\Sigma}).
\end{split}
\label{eq:expect}
\end{equation}
Maximizing \eqref{eq:expect} will thus promote tightness of $\x^{\top}\L_{B}\x$ given 
$\mathbf{x} \sim \mathcal{N}(\boldsymbol{0},\mathbf{\Sigma})$. 
We can now formalize our optimization problem for graph balancing as follows:
\begin{equation}
\small
\begin{split}
    &\max_{\L_{B}}\mathrm{Tr}(\L_{B}\mathbf{\Sigma})\hspace{5pt};\hspace{4pt} \text{s.t.} \hspace{4pt} \begin{cases}
    \L-\L_{B}\succeq 0,\\ \L_{B}\in\mathcal{B}\subset \mathbb{R}^{3n\times 3n}
    \end{cases}
\end{split}
\label{eq:balancing_opt_prob}
\end{equation}
where $\mathcal{B}$ is the set of $3n\times 3n$ combinatorial graph Laplacian matrices corresponding to balanced signed graphs.

Like previous graph approximation problems such as~\cite{zeng2017} where the ``closest" bipartite graph is sought to approximate a given non-bipartite graph, the balance graph approximation problem in \eqref{eq:balancing_opt_prob} is difficult because of its combinatorial nature.  
Thus, similar in approach to~\cite{zeng2017},
in the next section we propose a greedy algorithm to add one ``most beneficial" node at a time (with corresponding edges) to construct a balanced graph as solution to the problem in \eqref{eq:balancing_opt_prob}.

\section{Graph Balancing Algorithm}
\label{sec:algo}
\subsection{Notations and Definitions}
To facilitate understanding of our graph balancing algorithm, we first introduce the following notations and definitions. 

\vspace{0.07in}
\noindent
\textbf{1.} We define the notion of \textit{consistent} edges in a balanced graph, stemming from Theorem~\ref{theorem:CHT}, as follows.

\begin{definition}
\label{def:edgestype}
A consistent edge is a positive edge connecting two nodes of the same color, or a negative edge connecting two nodes of opposite colors.
An edge that is not consistent is an inconsistent edge.
\end{definition}
\noindent Given this definition, for an edge  $(i,j)\in\mathcal{E}_{B}$, we can write
\begin{equation}
\beta_{i}\beta_{j}\text{sign}(\W_{B}(i,j))=\begin{cases}
1; \hspace{3pt} \text{if} \hspace{3pt} (i,j) \hspace{3pt} \text{is consistent}\\
-1; \hspace{3pt} \text{if} \hspace{3pt} (i,j) \hspace{3pt} \text{is inconsistent},
\end{cases}
\label{eq:edgeweight_sign}     
\end{equation}
where $\beta_i$ denotes the color of node $i$ (\ie, $\beta_i~=~1$ if node $i$ is blue and $\beta_i~=~-1$ if node $i$ is red).

\vspace{0.07in}
\noindent
\textbf{2.} Given graph $\cG = (\cV, \cE, \cU)$, we define  a \textit{bi-colored} node set $\cS \subseteq \cV$, where all edges connecting nodes in $\cS$ are consistent. 
Further denote by $\cC \subseteq \cV \setminus \cS$ the set of nodes within one hop from $\cS$. 
An example of graph $\cG$ with set $\cC$ and $\cS$ is shown in Fig.\;\ref{fig:S&C}. 
If all nodes of graph $\cG$ are in set $\cS$, \ie, $\cS = \cV$ and $\cC = \emptyset$, then $\cG$ is balanced by Theorem~\ref{theorem:CHT}.    

\vspace{0.07in}
\noindent
\textbf{3.} Using \eqref{eq:edgeweight_sign}, for each node $j\in\cC$ we define two disjoint edge sets for edges connecting $j$ to $\cS$: 
i) the set $\cF_j$ of consistent edges from $j$ to $\cS$ when $\beta_{j}=1$, 
ii) the set $\cH_j$ of consistent edges from $j$ to $\cS$ when $\beta_{j}=-1$. 
Note that $\cF_j$ ($\cH_j$) is also the set of inconsistent edges when $\beta_j=-1$ ($\beta_j=1$). 
As an illustration, an example of edge sets $\cF_j$ and $\cH_j$ connecting $j\in\cC$ to $\cS$ is shown in Fig.\;\ref{fig:const_edge}, where consistent and inconsistent edges are drawn in different colors for a given value $\beta_{j}$. 
We see that if we remove inconsistent edges from $j\in\cC$ to $\cS$ for a given $\beta_{j}$ value, node $j$ can be added to $\cS$.

\vspace{0.07in}
\noindent
\textbf{4.} Denote by $f_{j}(\beta_{j})$ the value of the objective in \eqref{eq:balancing_opt_prob} when node $j$ is assigned color $\beta_j$ and the corresponding inconsistent edges from $j\in \cC$ to $\cS$ are removed. 
This means removing $\cH_j$ if $\beta_j=1$, and removing $\cF_j$ if $\beta_j=-1$.

\subsection{Balancing Algorithm}
\label{sec:bal_algorithm}

Using the above definitions, we present our iterative greedy algorithm to approximately solve \eqref{eq:balancing_opt_prob}.
In a nutshell, given an unbalanced graph $\cG$, we construct a corresponding balanced graph $\cG_B$ by adding one node at a time to set $\cS$: 
at each iteration, we select a ``most beneficial" node $j \in \cC$ with color $\beta_j$ to add to $\cS$, so that the objective in \eqref{eq:balancing_opt_prob} is locally maximized. 

Specifically, first, we initialize bi-colored set $\cS$ with a random node $i$ and color it blue, \ie, $\beta_{i}=1$.
Then, at a given iteration, we select a node $j \in \cC$ and corresponding color $\beta_j$ that maximizes objective $f_j(\beta_j)$, \ie, 
\begin{equation}
(j^{*}, \beta_{j^*}^{*})=\arg\max_{j \in \cC,\beta_j \in \{-1, 1\}} f_j(\beta_j).
\label{eq:locallymax}
\end{equation}
We add node $j^*$ of solution $(j^*,\beta_{j^*}^*)$ in \eqref{eq:locallymax} to set $\cS$, assign it color $\beta_{j^*}^*$, and remove inconsistent edges from $j^*$ to $\cS$ to conclude the iteration. 
Main steps of the algorithm are summarized as follows:\\

\noindent\textbf{Step 1:} Initialize set $\cS$ with a random node $i$ and set $\beta_i = 1$.\\
\noindent \textbf{Step 2:} Compute $f_{j}(1)$ and $f_{j}(-1)$ for each $j\in\cC$.\\
\noindent\textbf{Step 3:} Compute solution $(j^{*},\beta^*_{j^*})$ using~(\ref{eq:locallymax}).\\   
\noindent\textbf{Step 4:} Remove all inconsistent edges from $j^{*}\in\cC$ to $\cS$.\\
\noindent\textbf{Step 5:} $\mathcal{S}\leftarrow \mathcal{S}\cup j^{*} $. \\
\noindent\textbf{Step 6:} Update $\cC$ according to modified $\cS$.\\
\noindent\textbf{Step 7:} Repeat steps 2-6 until $\cC = \emptyset$.\\

\begin{figure}[t]
\centering
\includegraphics[width=0.35\textwidth]{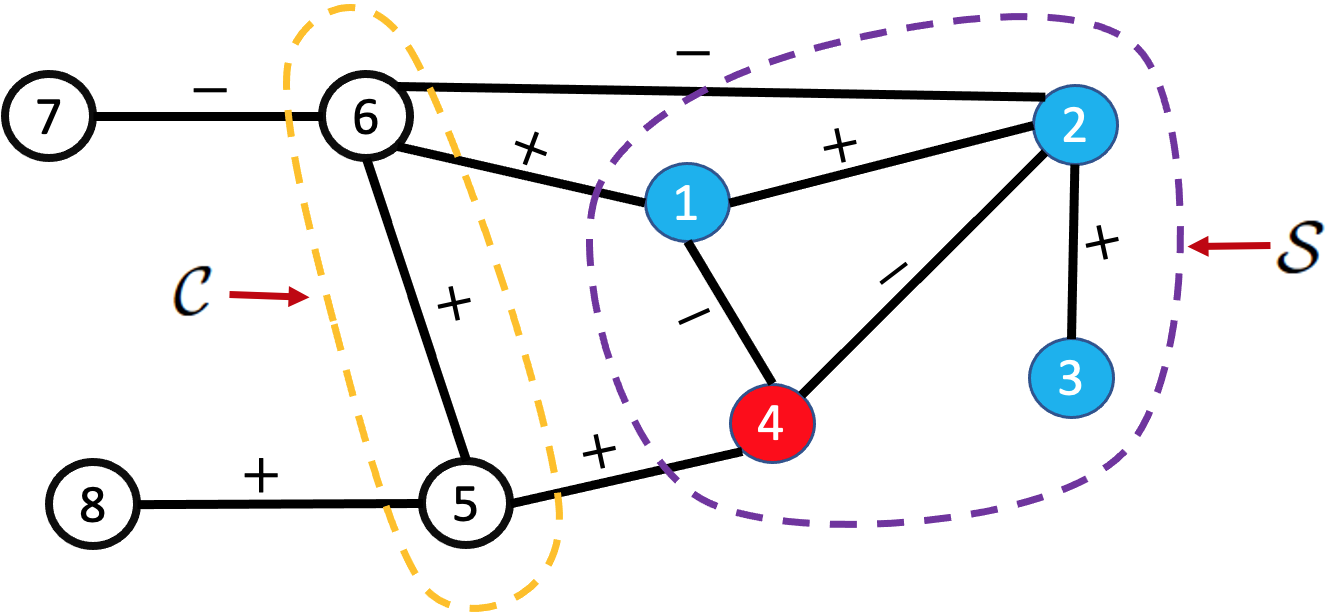}
\vspace{-11pt}
\caption{An example of an 8-node graph $\cG$ with sets $\cS$ and $\cC$.}
\label{fig:S&C}
\end{figure}

\begin{figure}[t]
\centering
\includegraphics[width=0.46\textwidth]{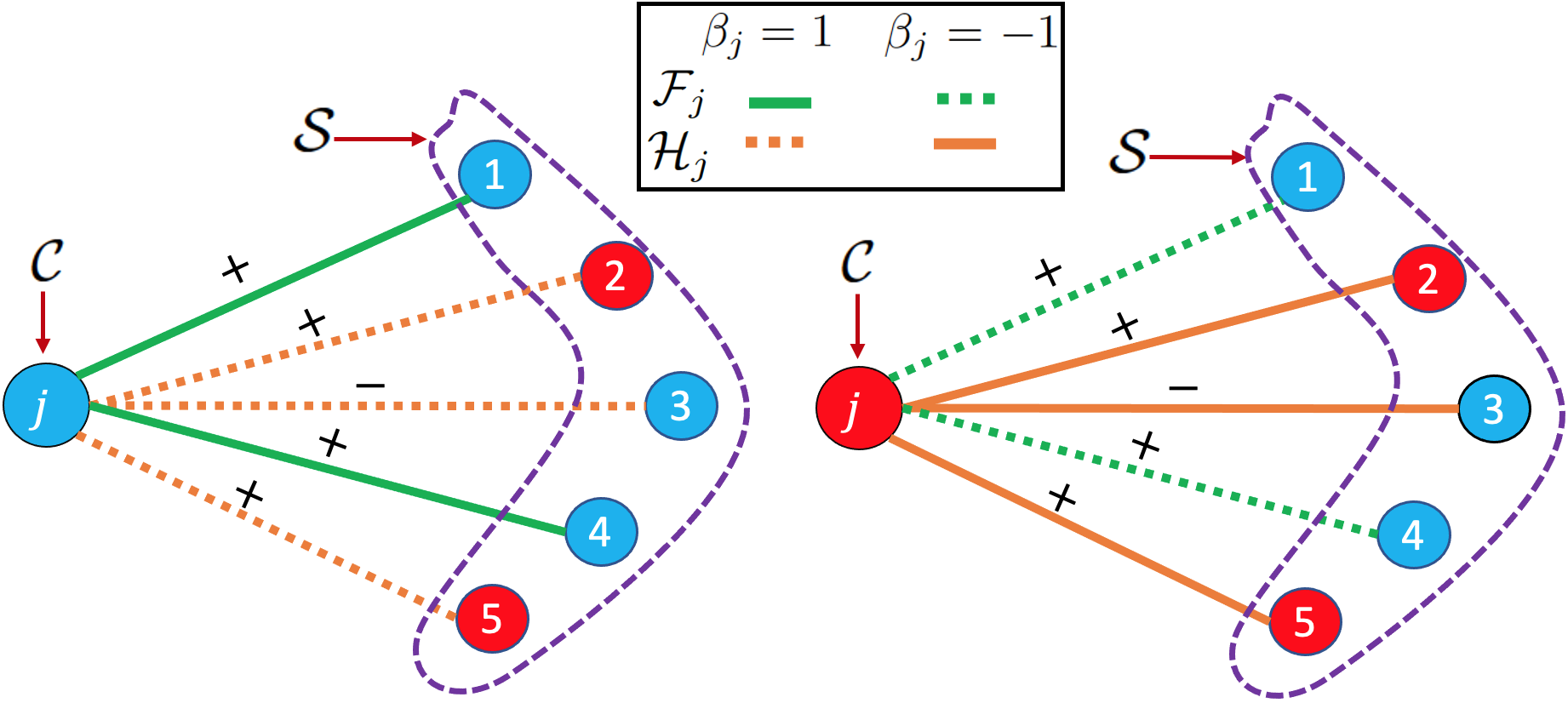}
\vspace{-10pt}
\caption{An example of consistent edges $\cF_j$ and inconsistent edges $\cH_j$ connecting node $j\in\cC$ to $\cS$ when $\beta_{j}=1$ (left), and consistent edges $\cH_j$ and inconsistent edges $\cF_j$ when $\beta_{j}-1$ (right). 
For simplicity, we omit edges connecting nodes within set $\cS$.}
\label{fig:const_edge}
\vspace{-13pt}
\end{figure}

\subsection{Inconsistent Edge Removal}
When removing inconsistent edges in step 2 (to compute $f_{j}(1)$ and $f_{j}(-1)$) and step 4, the constraint $\L-\L_{B}\succeq 0$ in optimization \eqref{eq:balancing_opt_prob} must be preserved. 
One can show that when removing a positive edge, the constraint $\L-\L_{B}\succeq 0$ is always maintained as formally stated in Proposition~\ref{prop:positive}.

\begin{proposition}
\label{prop:positive}
Given an undirected graph $\mathcal{G}~=~(\mathcal{V}, \mathcal{E},\mathcal{U})$, removing a positive edge $(q,r) \in \mathcal{E}$ from $\mathcal{G}$---resulting in graph $\cG_B = \left(\cV, \mathcal{E}\setminus(q,r),\mathcal{U} \right)$---entails $\L-\L_{B} \succeq 0$. 
\end{proposition}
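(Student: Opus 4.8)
The plan is to compute the difference matrix $\L-\L_{B}$ explicitly, recognize it as the combinatorial graph Laplacian of a single-edge graph, and then read off positive semi-definiteness directly from the quadratic-form identity \eqref{eq:GLR}.

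First I would track how deleting the positive edge $(q,r)$ changes the two Laplacians. Writing $\L=\D-\W$ and $\L_{B}=\D_{B}-\W_{B}$, the only adjacency entries that change are $\W(q,r)=\W(r,q)=w_{q,r}$, which become $0$ in $\W_{B}$; correspondingly, only the two diagonal degree entries change, namely $\D_{B}(q,q)=\D(q,q)-w_{q,r}$ and $\D_{B}(r,r)=\D(r,r)-w_{q,r}$, since the deleted edge no longer contributes to the degrees of $q$ and $r$. Every other entry of $\W_{B}$ and $\D_{B}$ coincides with that of $\W$ and $\D$. Subtracting, all off-edge contributions cancel, and $\L-\L_{B}$ is supported only on rows and columns $q$ and $r$: it carries $+w_{q,r}$ at the diagonal positions $(q,q)$ and $(r,r)$, and $-w_{q,r}$ at the off-diagonal positions $(q,r)$ and $(r,q)$. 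This is precisely the combinatorial graph Laplacian of the graph on $\cV$ containing the single positive edge $(q,r)$ with weight $w_{q,r}$.

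Finally, I would apply the identity \eqref{eq:GLR} to this single-edge Laplacian, whose edge set is just $\{(q,r)\}$, so that for every vector $\x$,
\[
\x^{\top}(\L-\L_{B})\x = w_{q,r}(x_{q}-x_{r})^{2}.
\]
Because $(q,r)$ is a \emph{positive} edge, $w_{q,r}>0$, and the right-hand side is nonnegative for all $\x$, yielding $\L-\L_{B}\succeq 0$ as claimed.

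I do not anticipate a genuine difficulty here: the argument is a short bookkeeping exercise once the difference matrix is identified as a single-edge Laplacian. The only point requiring care is correctly updating the two diagonal degree entries $\D(q,q)$ and $\D(r,r)$ when the edge is removed, which is the usual source of sign slips; everything else follows mechanically from the structure of $\L=\D-\W$ and from \eqref{eq:GLR}.
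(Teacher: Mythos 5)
Your proof is correct and follows essentially the same route as the paper's: both reduce the claim to the quadratic-form identity \eqref{eq:GLR} and the positivity of $w_{q,r}$, with your explicit identification of $\L-\L_{B}$ as a single-edge Laplacian being just a restatement of the paper's step $\x^{\top}\L\x = w_{q,r}(x_q-x_r)^2 + \x^{\top}\L_{B}\x$. No gaps.
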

\begin{proof}
By \eqref{eq:GLR}, for graph $\cG$, we can write following expressions $\forall \x\in\mathbb{R}^{3n}$,
\begin{equation}
\begin{split}
    \x^{\top}\L\x&=\sum_{i,j\in \mathcal{E}}w_{i,j}(x_{i}-x_{j})^{2},  \\
    &=w_{q,r}(x_{q}-x_{r})^{2}+\sum_{i,j\in \mathcal{E}\setminus(q,r)}w_{i,j}(x_{i}-x_{j})^{2}.
\end{split}
\label{eq:postive_edge_remove}
\end{equation}
Since $w_{q,r}>0$, we can write following inequality using~(\ref{eq:postive_edge_remove}),
\begin{equation}
    \x^{\top}\L\x \geq \sum_{i,j\in \mathcal{E}\setminus(q,r)}w_{i,j}(x_{i}-x_{j})^{2}=\x^{\top}\L_{B}\x.
\end{equation}
Hence, $\L-\L_{B}\succcurlyeq 0$, which concludes the proof
\end{proof}
\noindent Given the ease of removing inconsistent positive edges, we first remove them all before removing inconsistent negative edges.

In contrast, removal of a negative edge can violate $\L-\L_{B}\succeq 0$.  
Thus, when removing a negative inconsistent edge $(j,i)$, where $j\in\cC$ and $i\in\cS$, we update weights of \textit{two} additional edges that together with $(j,i)$ form a triangle, in order to ensure constraint $\L-\L_{B} \succeq 0$ is satisfied, while keeping updated edges consistent.
We state this formally in Proposition \ref{prop:negative egdes}.

\begin{proposition}
\label{prop:negative egdes}

Given an undirected graph $\mathcal{G}~=~(\mathcal{V}, \mathcal{E},\mathcal{U})$, assume that there is an inconsistent negative edge $(j,i)\in \mathcal{E}$ connecting nodes $j\in\cC$ and $i\in\cS$ of the same color.
Let $k \in \cS$ be a node of opposite color to nodes $j$ and $i$.
If edge $(j,i)$ is removed---resulting in graph $\mathcal{G}_{B}=(\mathcal{V}, \mathcal{E} \setminus (j,i), \mathcal{U})$---and weights for edges $(k,j)$ and $(k,i)$ are updated as
\begin{equation}
\tilde{w}_{p,q}=
w_{p,q}+2w_{j,i} ~~~ \mbox{for} ~ (p,q) \in\{(k,j), (k,i)\} 
\label{eq:weightupdate}
\end{equation}
then a) $\L-\L_{B}\succeq 0$; and b) $(k,j)$ and $(k,i)$ are consistent edges.
\end{proposition}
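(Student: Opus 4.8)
The plan is to prove both claims by isolating the effect of the edge modification on the graph Laplacian quadratic form. By \eqref{eq:GLR}, only the three edges $(j,i)$, $(k,j)$, $(k,i)$ forming the triangle on $\{i,j,k\}$ change weight between $\cG$ and $\cG_B$; every other edge term cancels in $\x^{\top}(\L-\L_B)\x$. So first I would reduce the global difference to a single quadratic form supported on these three edges.

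For part (a), I would read off the per-edge weight differences $w^{\cG}_{p,q}-w^{\cG_B}_{p,q}$: the removed edge $(j,i)$ contributes $w_{j,i}$, while each updated edge contributes $w_{p,q}-\tilde w_{p,q}=-2w_{j,i}$ by \eqref{eq:weightupdate}. Summing via \eqref{eq:GLR} gives
\[
\x^{\top}(\L-\L_B)\x = w_{j,i}(x_j-x_i)^2 - 2w_{j,i}(x_k-x_j)^2 - 2w_{j,i}(x_k-x_i)^2 .
\]
The key step is to recognize that, although this mixes a positive-looking and two negative-looking terms, it collapses to a perfect square: writing $x_j-x_i=(x_k-x_i)-(x_k-x_j)$ and completing the square yields
\[
\x^{\top}(\L-\L_B)\x = -\,w_{j,i}\,(2x_k-x_i-x_j)^2 .
\]
Since $(j,i)$ is a negative edge, $w_{j,i}<0$, so $-w_{j,i}>0$ and the right-hand side is nonnegative for all $\x\in\mathbb{R}^{3n}$, establishing $\L-\L_B\succeq 0$.

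For part (b), I would use the coloring $\beta_i=\beta_j$ (the assumed same color) and $\beta_k=-\beta_i$ together with the consistency criterion \eqref{eq:edgeweight_sign}. Since $k$ is opposite in color to both $i$ and $j$, the edges $(k,i)$ and $(k,j)$ are consistent exactly when their updated weights are negative. I would then verify $\tilde w_{k,i}<0$ and $\tilde w_{k,j}<0$ from three sign facts: $2w_{j,i}<0$ because $(j,i)$ is negative; $w_{k,i}\le 0$ because $k,i\in\cS$ are oppositely colored and every edge inside $\cS$ is consistent (so $(k,i)$ is a negative edge, or is absent with weight $0$); and $w_{k,j}\le 0$ because all inconsistent positive edges have already been removed before this step, so a positive edge between the oppositely colored $k$ and $j$ cannot remain (leaving $(k,j)$ either negative or absent). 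Adding the strictly negative $2w_{j,i}$ then forces both updated weights below zero, so both are consistent negative edges.

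The main obstacle is part (a): the nonnegativity is not visible from the signed coefficients, and everything hinges on spotting the exact cancellation that turns the indefinite-looking combination into $-w_{j,i}(2x_k-x_i-x_j)^2$ — the coefficient $2w_{j,i}$ in \eqref{eq:weightupdate} is precisely what makes this a perfect square. A secondary point I would state carefully is that the identity holds whether or not $(k,j)$ and $(k,i)$ already exist in $\cG$ (absent edges simply enter the sum with weight $0$), so the triangle update is well defined in all cases.
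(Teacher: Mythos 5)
Your proof is correct and takes essentially the same route as the paper: your part (b) argument (updated weights $\tilde w_{k,i},\tilde w_{k,j}$ can only become more negative, since any surviving edge between oppositely colored nodes is already nonpositive after inconsistent positive edges are removed, and absent edges enter with weight $0$) is the paper's own proof, while your part (a) reduction of $\x^{\top}(\L-\L_B)\x$ to the triangle $\{i,j,k\}$ and the perfect-square identity $w_{j,i}(x_j-x_i)^2-2w_{j,i}(x_k-x_j)^2-2w_{j,i}(x_k-x_i)^2=-w_{j,i}(2x_k-x_i-x_j)^2\ge 0$ is precisely the cancellation the update constant $2w_{j,i}$ in \eqref{eq:weightupdate} is engineered for (the paper defers this to its Appendix B). No gaps.
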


The proof for part a) of  Proposition~\ref{prop:negative egdes} is reported in Appendix~B in the supplementary material; the proof for part b) is given below.
\begin{proof}
Since, by assumption, node $k \in \cS$ is of opposite color to $i \in \cS$, if edge $(k,i) \in \cE$, then $w_{k,i} < 0$ for $(k,i)$ to be consistent.
Similarly, if $(k,j) \in \cE$, then $w_{k,j} < 0$, because by assumption $k$ and $j$ are of opposite colors, and inconsistent positive edges have already been removed. Hence $(k,j)$ is also a consistent edge.
This means that edge weight update in \eqref{eq:weightupdate} will only make $w_{k,i}$ and $w_{k,j}$ more negative, and would not switch edge sign and affect the consistency of edges $(k,i)$ and $(k,j)$. 
\end{proof}

We use a simple example in Fig.~\ref{fig:weight_update_case1} to illustrate how a negative inconsistent edge is removed. 
When removing a negative inconsistent edge $(j,i)$ from Fig.~\ref{fig:weight_update_case1} (left), we select $k\in \cS$, where $i,k\in \cS$ are of opposite colors. 
If negative edges $(k,i)$ and/or $(k,j)$ do not exist, we create edges with zero weight $w_{k,i}=0$ and/or $w_{k,j}=0$, resulting in a triangle $(i,k,j)$. 

Then, using Proposition~\ref{prop:negative egdes}, we remove negative edge $(j,i)$ and update edge weights $w_{i,k}$ and $w_{k,j}$ as shown in Fig.~\ref{fig:weight_update_case1} (right). 
We see that the updated edges $(k,i)$ and $(k,j)$ maintain the same signs and are consistent. 
Further, by Proposition~\ref{prop:negative egdes}, this edge weight update ensures that constraint $\L-\L_{B}\succeq 0$ remains satisfied.

Denote by $\cK$ the set of candidates for $k \in \cS$. 
One approach is to choose $k$ within candidate solution set $\cK$ so that $f_{j}^{k}(\beta_{j})$ (\ie,  value of $f_{j}(\beta_{j})$ for a given $k\in\mathcal{K}$) is maximized, \ie,
\begin{equation}
    k^{*}=\arg\max_{k\in\mathcal{K}}f_{j}^{k}(\beta_{j}).
\end{equation}
Such an exhaustive search for each $k\in\mathcal{K}$ for large $|\mathcal{K}|$ would increase the computational complexity significantly.
Instead, if the candidate set  $|\mathcal{K}|$ is greater than 1, we select a $k\in\mathcal{K}$ randomly. 
Experimental results show that there is no significant performance difference between the exhaustive approach and the random approach.

For a given value of $\beta_{j}$, if $|\mathcal{K}|=0$, then it means that there are no nodes in $\cS$ of opposite color to $i$, or more simply, all nodes in $\cS$ are of the same color.
In this case, it suffices to color $j \in \cC$ to be the opposite color to nodes in $\cS$ and remove all inconsistent positive edges to $\cS$.

\begin{figure}[t]
\centering
\includegraphics[width=0.38\textwidth]{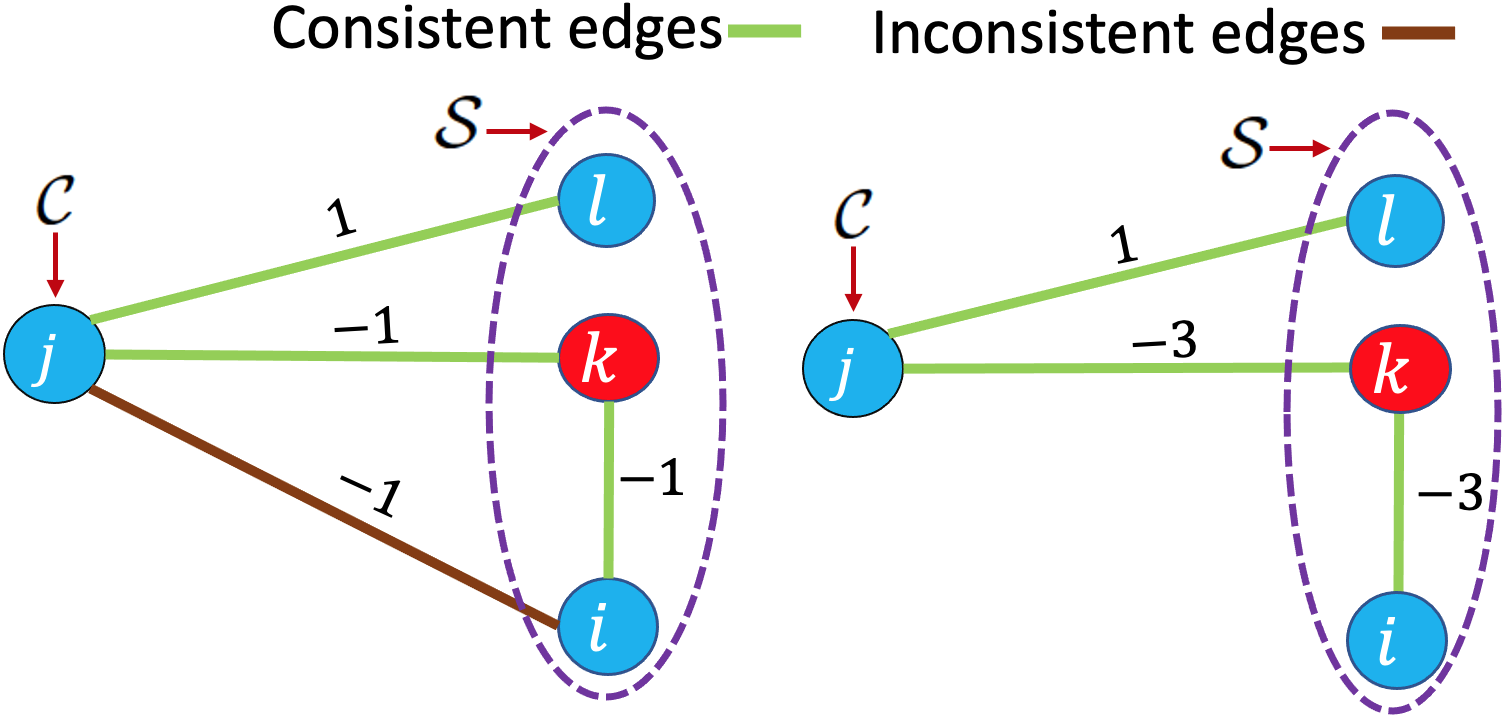}
\vspace{-8pt}
\caption{An example for edge weight updating. A cycle $i \rightarrow j \rightarrow k$ with inconsistent negative edge $(j,i)$ need to be removed (left); updated edge weights after removing the negative inconsistent edge (right).}
\label{fig:weight_update_case1}
\end{figure}

\section{Experimental Results}
\label{sec:exp}

\begin{figure}[t]
\centering
\includegraphics[width=0.48\textwidth]{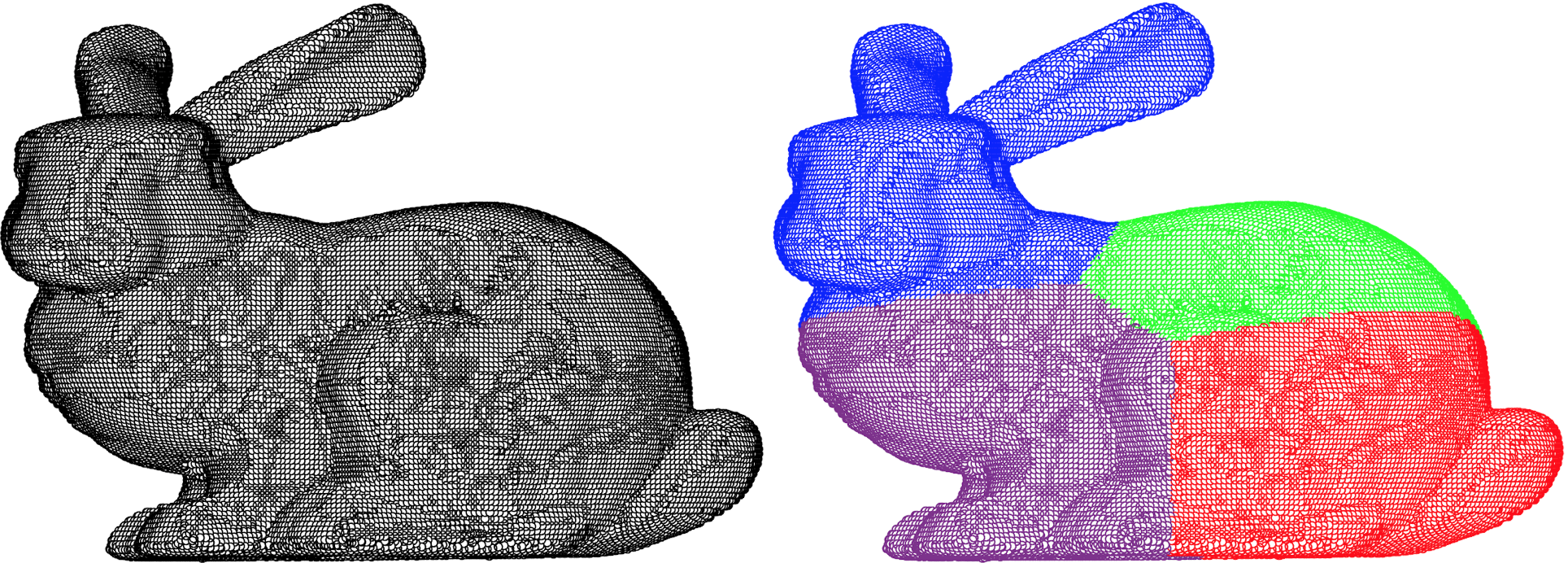}
\label{fig:pcd-day}
\vspace{0pt}
\caption{An example of a PC model (left) and its associated sub-clouds (right). Each color represents a different sub-cloud.} 
\label{fig:clust}
\vspace{0pt}
\end{figure}

\begin{table*}[t]
\centering 
\caption{Average SR reconstruction $\text{C2C}(\times 10^{-2})$ error per PC model for different sub-sampling algorithms using different PC SR methods.}
\vspace{-8pt}
\resizebox{\textwidth}{!}{
\begin{tabular}{c|c|c|c|c|c||c|c|c|c|c||c|c|c|c|c||c|c|c|c|c}
\cline{2-21}
\multicolumn{1}{l|}{}                                                            & \multicolumn{20}{c}{Different SR methods under five sub-sampling ratios: $0.2|0.3|0.4|0.5|0.6$}                                                                          \\ \hline
\multicolumn{1}{c|}{\begin{tabular}[c]{@{}c@{}}Sub-sampling\\ methods\end{tabular}} & \multicolumn{5}{c||}{APSS}      & \multicolumn{5}{c||}{EAR}       & \multicolumn{5}{c||}{GTV}         & \multicolumn{5}{c}{FGTV}        \\ \hline
\multicolumn{1}{c|}{STS}                                                         & 5.90 & 4.52  & 3.85  & 3.32  & 3.08  & 5.24 & 4.07 & 3.42 & 2.98 & 2.87 & 5.03 & 3.82  & 3.14 & 2.85  & 2.59  & 4.26 & 2.85  & 2.76  & 2.47  & 2.29  \\ \hline
\multicolumn{1}{c|}{MCS}                                                        & 5.41     & 4.27      & 3.75      & 3.24      & 2.98 & 5.11 & 3.82      &3.21      & 2.95      & 2.61     &    4.97  & 3.52     &  3.06    & 2.83     & 2.52     &     4.29 & 3.10      & 2.77      & 2.48      & 2.30      \\ \hline
\multicolumn{1}{c|}{MESS}                                                        &  4.21    & 2.95     &    2.37  & 1.97      & 1.72 &  3.75  &  2.51     & 1.98      &  1.63    &1.41      &      3.68& 2.40      & 1.95      & 1.52      &  1.33     &      3.44& 2.11     &1.76      & 1.46     & 1.28     \\ \hline
\multicolumn{1}{c|}{PDS}                                                        &  3.12    & 2.41      & 2.02      & 1.83      & 1.72 &   2.64   & 1.86      & 1.52      & 1.40      & 1.35      &      2.58 &  1.75    & 1.47     &    1.38  & 1.33      &     2.51 & 1.60      & 1.45      & 1.36      & 1.29      \\ \hline
\multicolumn{1}{c|}{RS}                                                       &   4.58   &  4.50    & 2.98      & 2.67     & 2.46      &  4.45   & 3.92      & 2.85      & 2.50      & 2.35      &     4.35 & 3.36      & 2.61      & 2.18      & 2.01      &     3.93 & 3.09      & 2.23      & 2.09      & 1.85     \\ \hline
\multicolumn{1}{c|}{BGFS}                                                        &   4.48   & 3.75      & 2.99      & 2.58      & 2.43  &   4.36   & 3.63      & 2.84      & 2.52      & 2.31      &   4.28   & 3.30      & 2.54      & 2.10      & 1.95      &     3.76 & 3.05      & 2.17      & 2.04      & 1.78      \\ \hline
\multicolumn{1}{c|}{CS}                                                        &  4.53    & 3.88      & 2.94      & 2.50     & 2.41 &   4.42   & 3.84      & 2.75     & 2.45      &  2.34      &    4.31  & 3.25      & 2.48      & 2.07     &  1.82      &     3.95  &   3.02      &  2.11     & 1.92     &   1.63   \\ \hline
\multicolumn{1}{c|}{PS}                                                        &  4.43    & 3.80      & 2.87      & 2.45      & 2.23     &   4.05   & 3.24      & 2.45      & 2.08      & 1.92      &   3.91   & 3.07      &  1.98    & 1.90      &  1.79    &    3.72  & 2.94     & 1.95      &  1.87    & 1.75     \\ \hline
\multicolumn{1}{c|}{FPS}                                                       &   3.01   & 2.22      & 1.87      & 1.56      & 1.41  &   2.87   & 2.03     & 1.54      & 1.42      & 1.34      &   2.62   &  1.80    & 1.41      & 1.35      & 1.26      &     2.59 & 1.71     &  1.40    & 1.30     & 1.24     \\ \hline
\multicolumn{1}{c|}{AGBS}                                                   &  2.75    & 1.93      & 1.67      & 1.45      &  1.41  &  2.67    & 1.77      & 1.51      & 1.34      & 1.23      &    2.58  & 1.75     & 1.46      & 1.28      & 1.18     &     2.50 & 1.53     & 1.37     & 1.25     & 1.16     \\ \hline
\multicolumn{1}{c|}{Proposed}                                                   & \textbf{2.48}     & \textbf{1.70}      & \textbf{1.37}      & \textbf{1.30}      & \textbf{1.28}     &   \textbf{2.34}   & \textbf{1.62}      & \textbf{1.31}      & \textbf{1.25}      & \textbf{1.16}      &   \textbf{2.29}   & \textbf{1.53}      & \textbf{1.27}      & \textbf{1.18}      & \textbf{1.07}      & \textbf{2.21} & \textbf{1.42}     & \textbf{1.24}     &  \textbf{1.12}    & \textbf{1.03}     \\ \hline
\end{tabular}
}
\label{tab:C2C}
\vspace{-7pt}
\end{table*}

\begin{table*}[t]
\centering 
\caption{Average SR reconstruction $\text{C2P}(\times 10^{-3})$ error per PC model for different sub-sampling algorithms using different PC SR methods.}
\vspace{-8pt}
\resizebox{\textwidth}{!}{\begin{tabular}{c|c|c|c|c|c||c|c|c|c|c||c|c|c|c|c||c|c|c|c|c}
\cline{2-21}
\multicolumn{1}{l|}{}                                                            & \multicolumn{20}{c}{Different SR methods under five sub-sampling ratios: $0.2|0.3|0.4|0.5|0.6$}                                                                          \\ \hline
\multicolumn{1}{c|}{\begin{tabular}[c]{@{}c@{}}Sub-sampling\\ methods\end{tabular}} & \multicolumn{5}{c||}{APSS}       & \multicolumn{5}{c||}{EAR}       & \multicolumn{5}{c||}{GTV}         & \multicolumn{5}{c}{FGTV}        \\ \hline
\multicolumn{1}{c|}{STS}                                                         & 6.67 & 3.84 & 3.33 & 3.05 & 2.78  & 6.03 & 3.14 & 2.93  & 2.60  & 2.47  & 5.92 & 2.94 & 2.79 & 2.40 & 2.28 & 5.67 & 2.74  & 2.65  & 2.27  & 2.02  \\ \hline
\multicolumn{1}{c|}{MCS}                                                        &  6.80    & 4.43      & 3.47      & 2.94      & 2.55      &     6.15 & 4.22      & 3.08      & 2.57      & 2.34      &  6.02    &  4.08     & 2.90      & 2.46      &   2.22 &    5.65  & 3.40      & 2.77      & 2.28      & 2.03      \\ \hline
\multicolumn{1}{c|}{MESS}                                                        &    4.91  & 2.66      & 2.02      & 1.37      & 0.91      &   4.30   & 2.25      & 1.61      & 1.09     & 0.73      &   4.11   & 2.01      & 1.44      & 0.90      &  0.58      &   3.90   & 1.80      & 1.24     & 0.78      & 0.50      \\ \hline
\multicolumn{1}{c|}{PDS}                                                        &  3.03    & 1.87      & 1.34      & 1.05      & 0.76     &   2.56   & 1.43      & 1.03      & 0.75      & 0.58      &   2.38   & 1.27      & 0.87      & 0.59      & 0.46      &  2.20    & 1.11      & 0.74     & 0.50      & 0.36      \\ \hline
\multicolumn{1}{c|}{RS}                                                       &  5.07    & 2.95      & 2.17      & 1.54      & 1.05   &   4.38   & 2.31      & 1.75      & 1.09      & 0.73      &  4.14    & 2.10      & 1.57      & 0.98      & 0.62      &     3.98 & 1.97      & 1.52      & 0.90      & 0.52      \\ \hline
\multicolumn{1}{c|}{BGFS}                                                        &  4.81    & 2.90      & 2.08      & 1.50      & 0.97  &   4.32   & 2.24      & 1.68      & 1.04      & 0.70      &  4.05    & 2.05      & 1.55      & 0.94     & 0.58      &    3.75  & 1.84      & 1.45      & 0.85      & 0.49      \\ \hline
\multicolumn{1}{c|}{CS}                                                        &  5.01    & 2.84      & 2.11      & 1.44     & 0.98 &  4.27    & 2.18     & 1.63      & 1.05      & 0.68      &  4.10    & 2.03      & 1.48      & 0.91      & 0.55      &  3.90    & 1.87      & 1.36      & 0.83      & 0.48      \\ \hline
\multicolumn{1}{c|}{PS}                                                        &   4.86   & 2.51      & 1.82      & 1.14      & 0.88     &   4.18   & 2.02      & 1.40     & 0.86      & 0.63      &   4.05   & 1.87      &1.28      & 0.77      & 0.52     &  3.85    &  1.72      & 1.14      & 0.67      & 0.46      \\ \hline
\multicolumn{1}{c|}{FPS}                                                       &  3.72    & 1.98      & 1.38      &  0.97    & 0.60 &     3.12 & 1.54     & 0.95     & 0.60      &  0.39  &  2.93    & 1.42     & 0.81      & 0.48      & 0.29      &   2.77   & 1.30      & 0.70      & 0.40      & 0.24      \\ \hline
\multicolumn{1}{c|}{AGBS}                                                   &  3.24    & 1.78      & 1.45      & 1.08      & 0.68  &   2.67   & 1.33      & 1.07      & 0.76      & 0.53      &  2.54    & 1.19     & 0.93      & 0.65      & 0.42      &   2.40   & 1.07      & 0.84      & 0.57      & 0.35      \\ \hline
\multicolumn{1}{c|}{Proposed}                                                   &  \textbf{2.87}    & \textbf{1.60}      & \textbf{1.18}      & \textbf{0.88}      &    \textbf{0.54} &   \textbf{2.29}   & \textbf{1.24}      & \textbf{0.88}     & \textbf{0.56}     & \textbf{0.37}      &   \textbf{2.18}   & \textbf{1.05}      & \textbf{0.72}     & \textbf{0.45}      & \textbf{0.28}      &    \textbf{1.98}  & \textbf{0.94}      & \textbf{0.61}      & \textbf{0.37}      & \textbf{0.22}      \\ \hline
\end{tabular}}
\label{tab:C2P}
\end{table*}

\begin{figure*}[t]
\centering
\subfloat[BGFS]{
\includegraphics[width=0.25\textwidth]{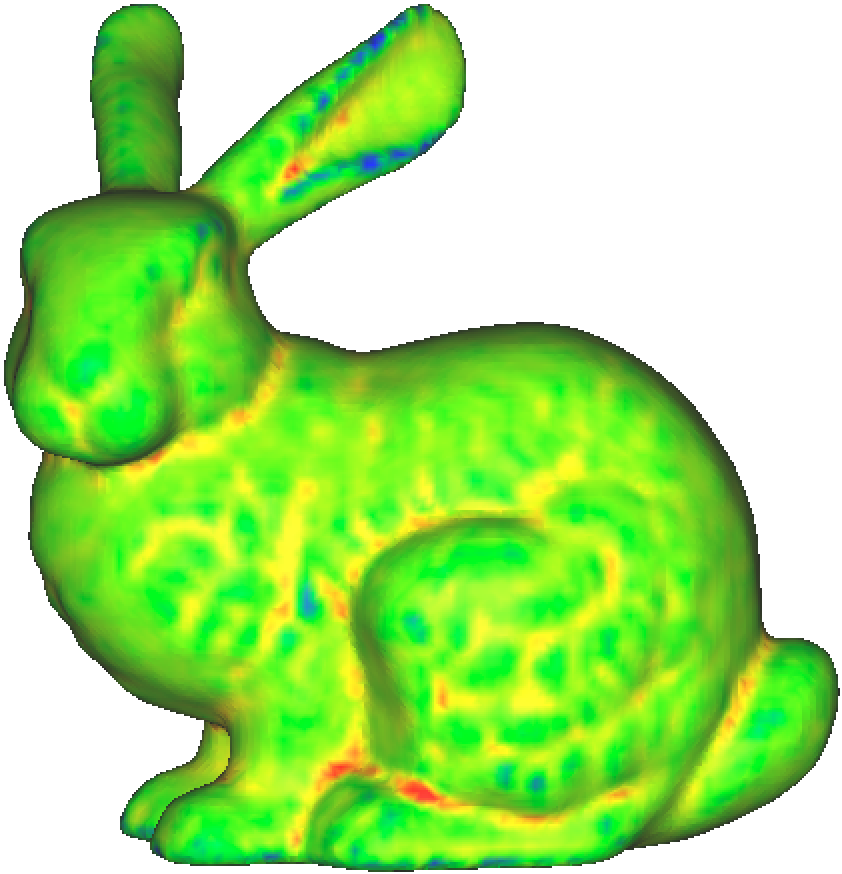}
\label{fig:pcd-on}}
\hspace{-8pt}
\subfloat[{PDS}]{
\includegraphics[width=0.25\textwidth]{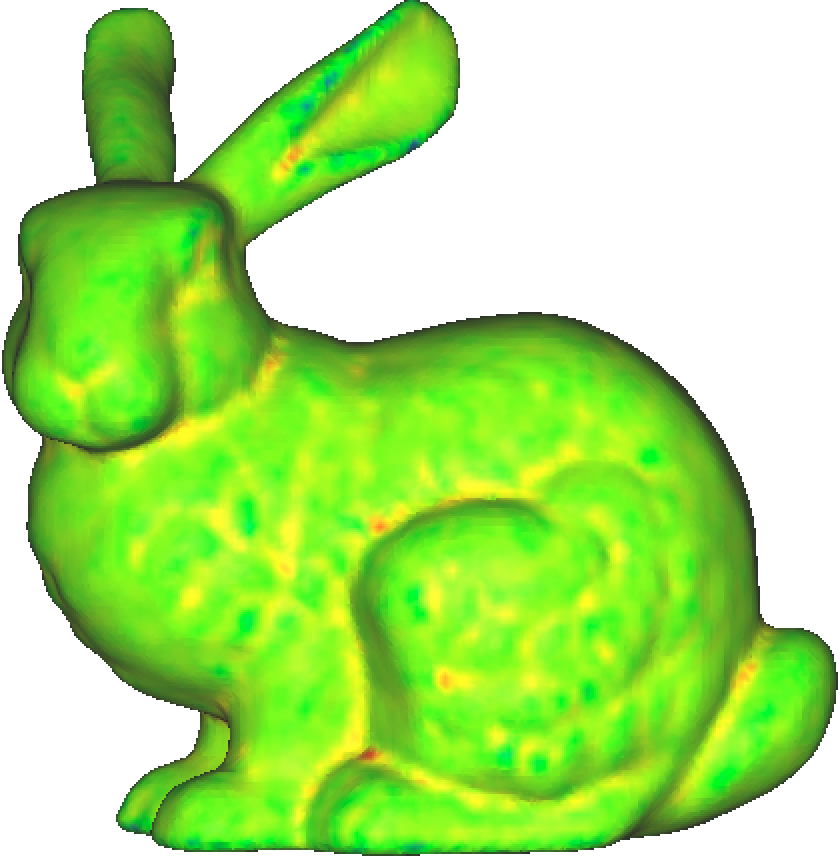}
\label{fig:pcd-off}}
\hspace{-8pt}
\subfloat[proposed]{
\includegraphics[width=0.25\textwidth]{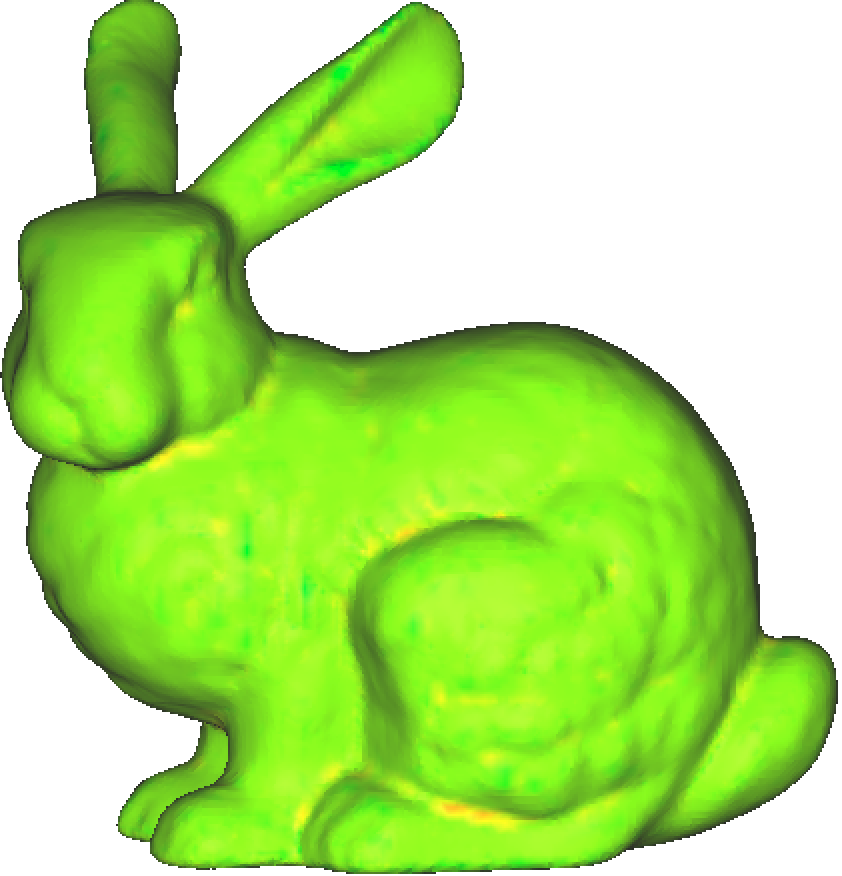}
\label{fig:tvlcd-ofn}}
\subfloat{
\includegraphics[width=0.04\textwidth]{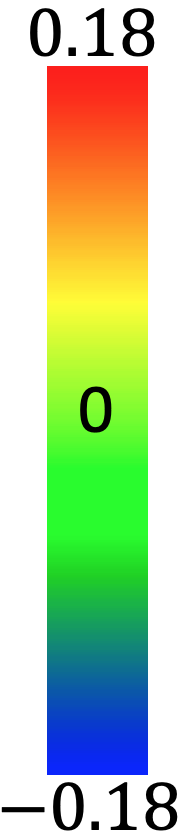}
\label{fig:tvlcd-ofn}}
\caption{SR reconstruction results obtained using FGTV SR method from different methods of sub-sampled \texttt{Bunny} models under 0.2 sub-sampling ratio. Here the surfaces are colorized by the distance from the ground truth surface (color-map is included).} 
\label{fig:bunny}
\vspace{-14pt}
\end{figure*}

\begin{figure*}[!htb]
\centering
\vspace{-0pt}
\subfloat[BGFS]{
\includegraphics[width=0.25\textwidth]{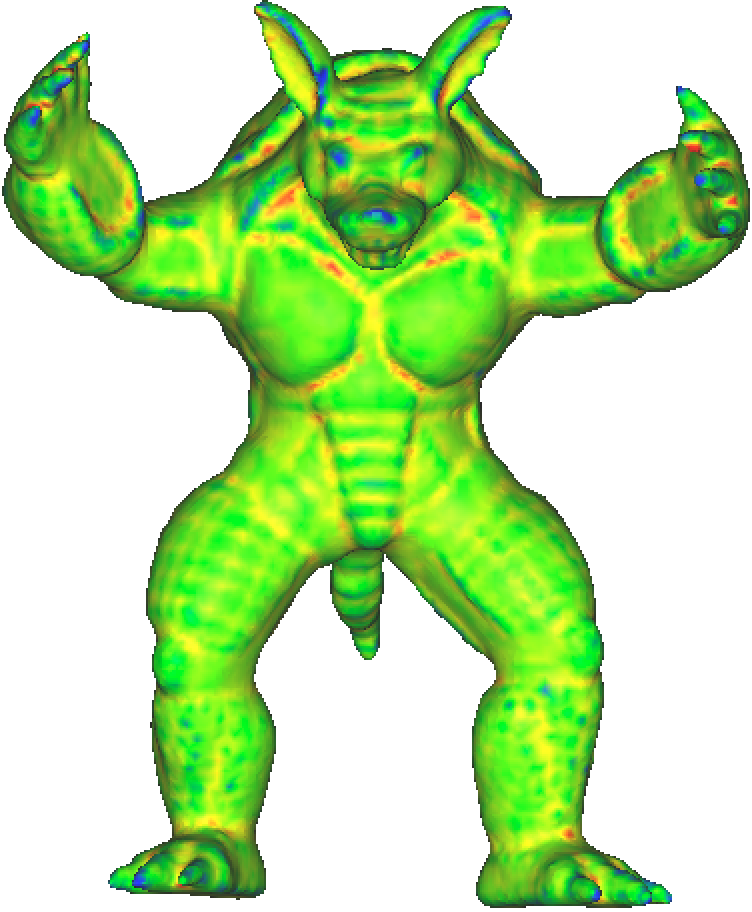}
\label{fig:pcd-on}}
\subfloat[{PDS}]{
\includegraphics[width=0.25\textwidth]{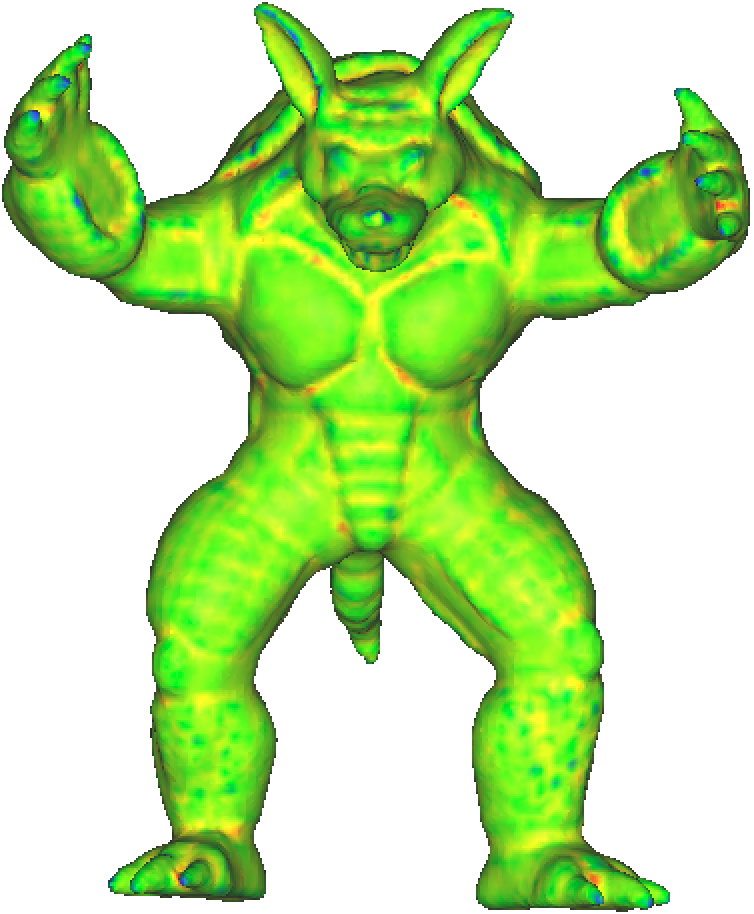}
\label{fig:pcd-off}}
\subfloat[proposed]{
\includegraphics[width=0.25\textwidth]{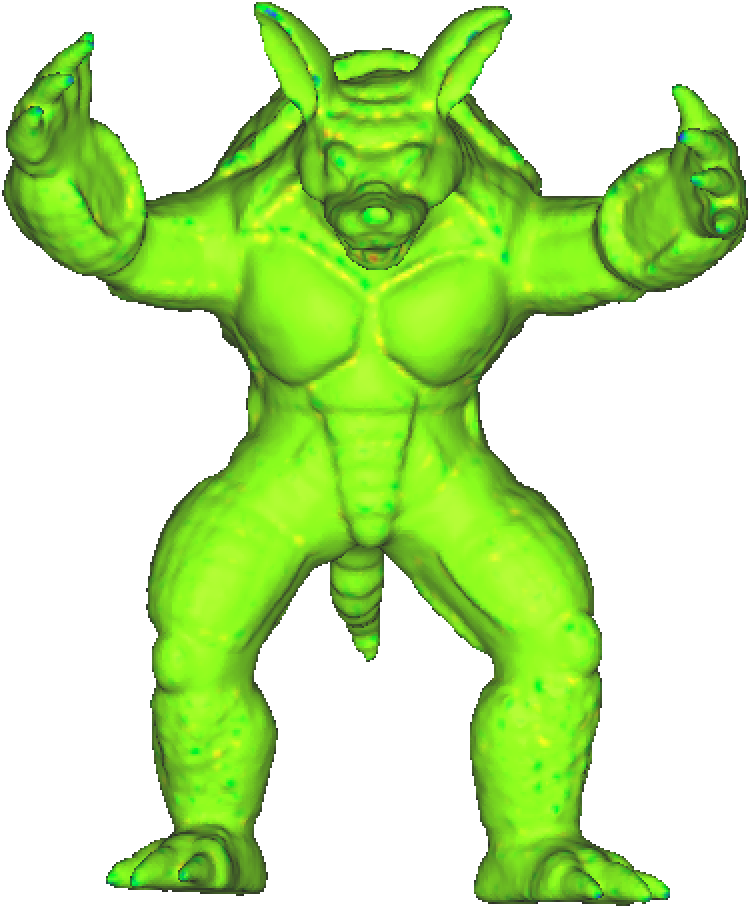}
\label{fig:tvlcd-ofn}}
\subfloat{
\includegraphics[width=0.04\textwidth]{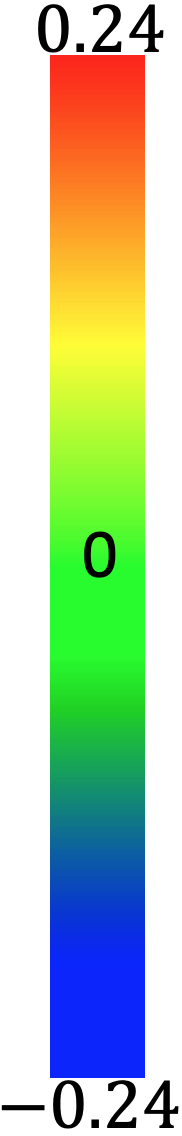}
\label{fig:tvlcd-ofn}}
\caption{SR reconstruction results obtained using GTV SR method from different methods of sub-sampled \texttt{Armadillo} models under 0.2 sub-sampling ratio. Here the surfaces are colorized by the distance from the ground truth surface (color-map is included)} 
\label{fig:armadillo}
\vspace{-10pt}
\end{figure*}

\begin{figure*}[!htb]
\centering
\vspace{-0pt}
\subfloat[BGFS]{
\includegraphics[width=0.25\textwidth]{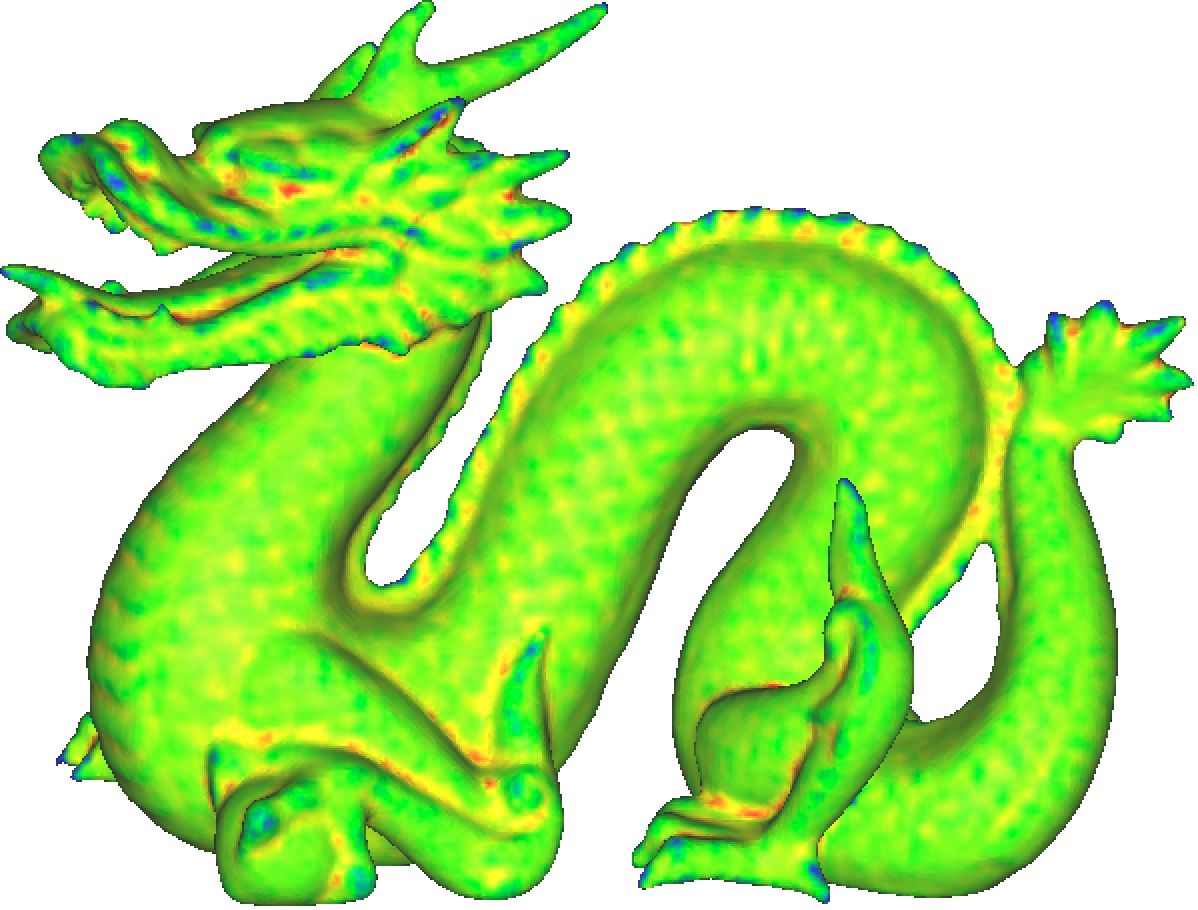}
\label{fig:pcd-on}}
\subfloat[{PDS}]{
\includegraphics[width=0.25\textwidth]{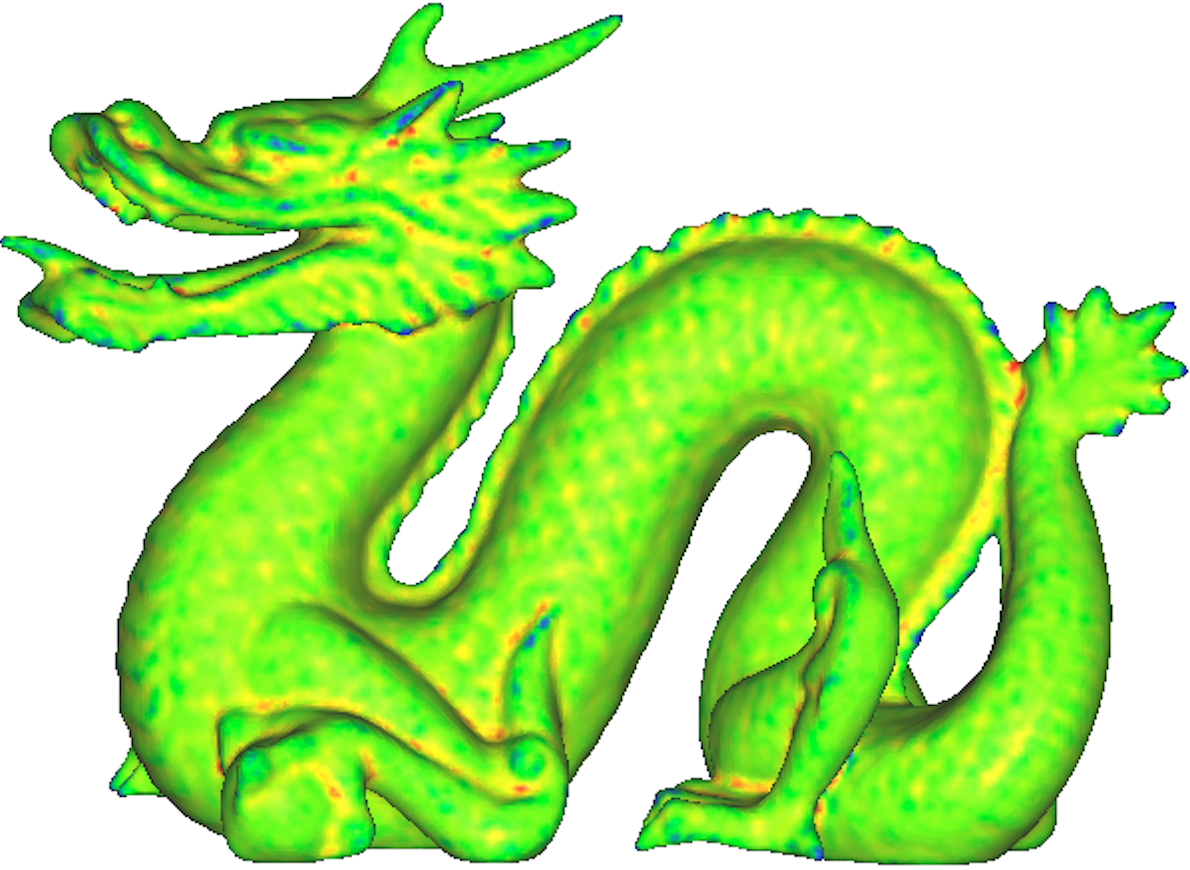}
\label{fig:pcd-off}}
\subfloat[proposed]{
\includegraphics[width=0.25\textwidth]{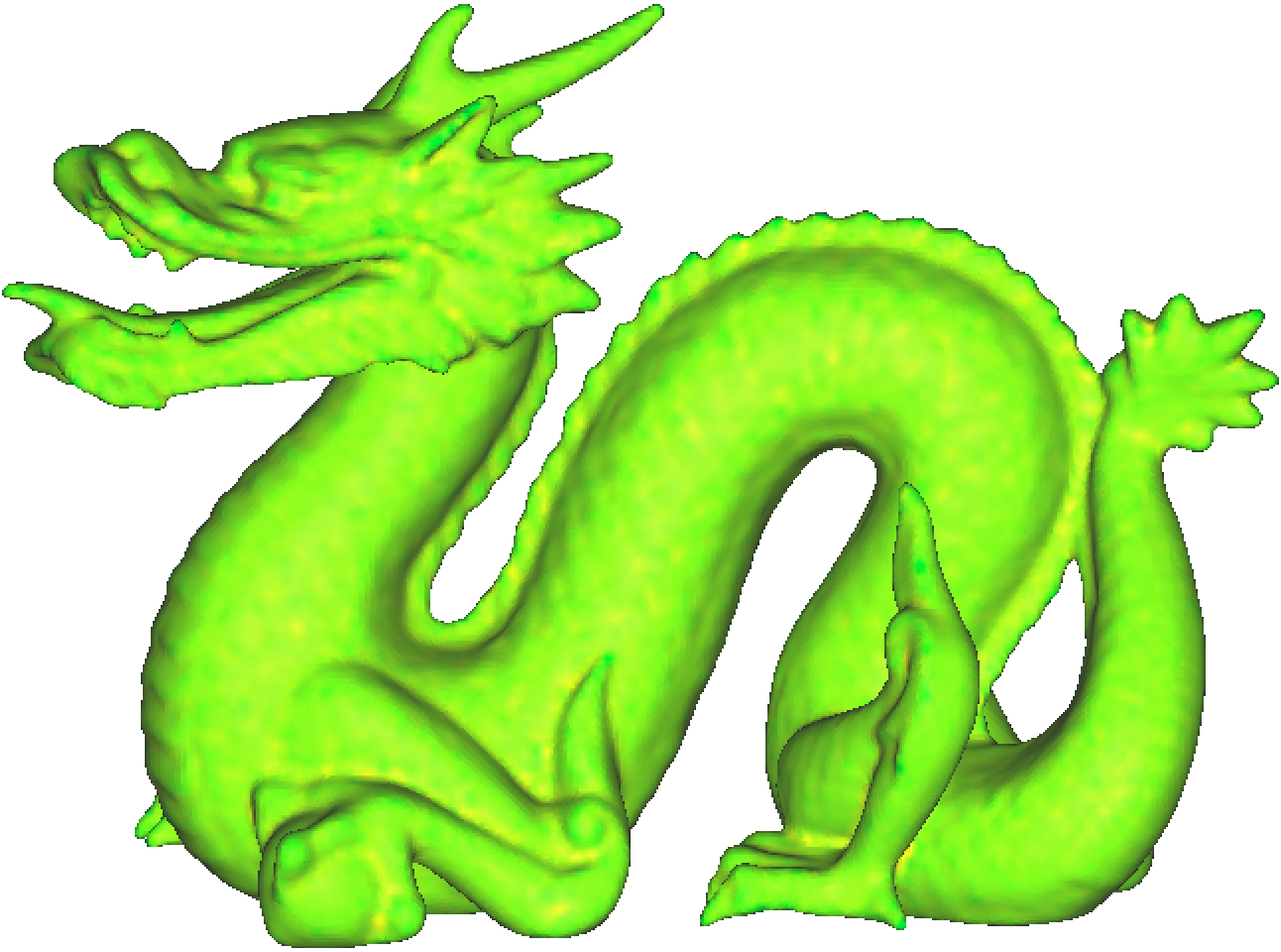}
\label{fig:tvlcd-ofn}}
\subfloat{
\includegraphics[width=0.032\textwidth]{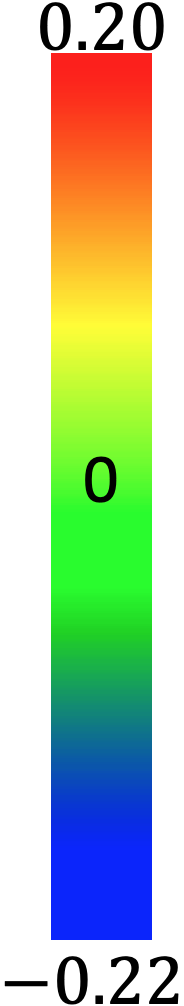}
\label{fig:tvlcd-ofn}}
\caption{SR reconstruction results obtained using EAR SR method from different methods of sub-sampled \texttt{Dragon} models under 0.3 sub-sampling ratio. Here the surfaces are colorized by the distance from the ground truth surface (color-map is included)} 
\label{fig:dragon}
\vspace{-10pt}
\end{figure*}

\begin{table}[t]
\centering
\caption{Comparisons between mean $\lambda_{\min}(\B)\left(\times 10^{-3}\right)$ via AGBS~\cite{dinesh2020} method and mean $\lambda_{\min}(\B)\left(\times 10^{-3}\right)$ via proposed method}
\vspace{-8pt}
\begin{tabular}{c|c|c|c|c|c|}
\cline{2-6}
\multirow{2}{*}{}              & \multicolumn{5}{c|}{Sub-sampling ratio (\%):} \\ \cline{2-6} 
                               & 20     & 30     & 40     & 50     & 60     \\ \hline
\multicolumn{1}{|c|}{AGBS} & 1.92  & 3.84  & 8.53  & 14.5  & 26.8  \\ \hline
\multicolumn{1}{|c|}{Proposed} & \textbf{3.15}  & \textbf{5.05}  & \textbf{12.2}  & \textbf{24.8}  & \textbf{38.4}  \\ \hline
\end{tabular}
\label{tab:mineigen}
\vspace{-10pt}
\end{table}

\begin{table}[t]
\centering
\caption{RE and DCS of balanced graphs obtained from AGBS~\cite{dinesh2020} method and the proposed method for different PC models}
\vspace{-7pt}
\begin{tabular}{|c|c|c|c|c|}
\hline
\multirow{2}{*}{PC models} & \multicolumn{2}{c|}{RE} & \multicolumn{2}{c|}{DCS} \\ \cline{2-5} 
                           & AGBS   & Proposed   & AGBS    & Proposed   \\ \hline
\texttt{Bunny}                      & 0.357      & \textbf{0.302}      & 0.618       & \textbf{0.670}      \\ \hline
\texttt{Dragon}                     & 0.441      & \textbf{0.371}      & 0.634       & \textbf{0.678}      \\ \hline
\texttt{Armadillo}                  & 0.452      & \textbf{0.385}      & 0.597       & \textbf{0.635}      \\ \hline
\texttt{Happy Buddha}               & 0.371      & \textbf{0.338}      & 0.628       & \textbf{0.687}      \\ \hline
\texttt{Asian Dragon}               & 0.411      & \textbf{0.357}      & 0.612       & \textbf{0.683}      \\ \hline
\texttt{Lucy}                       & 0.503      & \textbf{0.434}      & 0.504       & \textbf{0.553}      \\ \hline
\texttt{Gargoyle}                      & 0.397      & \textbf{0.348}      & 0.625       & \textbf{0.690}      \\ \hline
\texttt{DC}                     & 0.512      & \textbf{0.465}      & 0.523       & \textbf{0.592}      \\ \hline
\texttt{Daratech}                  & 0.368      & \textbf{0.324}      & 0.635       & \textbf{0.697}      \\ \hline
\texttt{Anchor}               & 0.385      & \textbf{0.334}      & 0.617       & \textbf{0.663}      \\ \hline
\texttt{Lordquas}               & 0.437      & \textbf{0.388}      & 0.612       & \textbf{0.670}      \\ \hline
\texttt{Julius}                       & 0.507      & \textbf{0.448}      & 0.493       & \textbf{0.538}      \\ \hline
\texttt{Nicolo}               & 0.485      & \textbf{0.430}      & 0.631       & \textbf{0.676}      \\ \hline
\texttt{Laurana}                       & 0.527      & \textbf{0.481}      & 0.512       & \textbf{0.557}      \\ \hline
\texttt{redandblack}    & 0.541      & \textbf{0.494}      & 0.521       & \textbf{0.582}      \\ \hline
\texttt{longdress}                     & 0.512      & \textbf{0.471}      & 0.523       & \textbf{0.561}      \\ \hline
\end{tabular}
\label{tab:RE_DCS}
\vspace{-10pt}
\end{table}

We present comprehensive experiments to verify the effectiveness of our proposed PC sub-sampling strategy. 
Specifically, we compared the performance of our scheme against 10 existing PC sub-sampling methods (we used at least one method per category of existing PC sampling works discussed in Section~\ref{sec:related}): STS~\cite{Ranzuglia2012}, MCS~\cite{Ranzuglia2012}, MESS~\cite{Ranzuglia2012}, PDS~\cite{corsini2012efficient},  random sub-sampling (RS)~\cite{pomerleau2013comparing}, box grid filter-based sub-sampling (BGFS)~\cite{rusu2012downsampling}, clustering-based sub-sampling (CS)~\cite{shi2011}, progression sub-sampling (PS)~\cite{yang2015feature}, feature preserved sub-sampling (FPS)~\cite{qi2019}, and our previous work on ad-hoc graph balancing based sub-sampling (AGBS)~\cite{dinesh2020ICIP}. For the experiments, we used 16 high-resolution PC models\footnote{\texttt{Bunny}, \texttt{Dragon}, \texttt{Armadillo}, \texttt{Happy Buddha}, \texttt{Asian Dragon}, \texttt{Lucy} from~\cite{levoy2005stanford}, \texttt{Gargoyle}, \texttt{DC}, \texttt{Daratech}, \texttt{Anchor}, \texttt{Lordquas} from~\cite{berger2013}, \texttt{Julius}, \texttt{Nicolo} from~\cite{zhang2015}, \texttt{Laurana} from~\cite{potenziani2015}, and \texttt{redandblack}, \texttt{longdress} from~\cite{d20168i}}. 
All PC models were first rescaled, so that each tightly fitted inside a bounding box with the same diagonal. 
During the experiments, PC models were sub-sampled to 20\%, 30\%, 40\%, 50\%, and 60\% of the original points using our scheme and existing methods.

Our PC sub-sampling method can be slower for larger-sized PCs due to computational complexity of GDAS~\cite{yuanchao2019ICASSP,bai2020} for very large graphs. 
Thus, we first divided a PC into $c$ clusters (or \textit{sub-clouds}), so that each sub-cloud contained $M$ points on average (in our experiments, we use $M~=~10000$).
We then performed sub-sampling for each sub-cloud independently. 
For this purpose, we used $K$-means clustering~\cite{xu2018}, using 3D coordinates as the feature for clustering \cite{xu2018,dinesh2019}. An example of such sub-clouds obtained using the \texttt{Bunny} PC model~\cite{levoy2005} is shown in Fig.\,\ref{fig:clust}. 
Here, the model is divided into four sub-clouds, and each color represents a different sub-cloud.

\subsection{Reconstruction Error Comparisons}
\label{sec:rec_error}

To demonstrate the performance of our PC sub-sampling scheme, we compared the SR reconstruction errors of our method against existing PC sub-sampling methods. 
Specifically, sub-sampled PCs were super-resolved (reconstructed) to approximately same number of points as the ground truths using different SR methods in the literature---APSS~\cite{guennebaud2008}, EAR~\cite{huang2013}, and GTV~\cite{dinesh2019}, and FGTV~\cite{dinesh2020ICASSP}. 
Then, for quantitative SR reconstruction error comparisons, we measured the point-to-point (denoted as C2C) error and point to plane (denoted as C2P) error \cite{tian2017} between ground truth and super-resolved PCs. 

For the C2C error, we first measured the average of the Euclidean distances between ground truth points and their closest points in the super-resolved cloud, and also that between the points of the super-resolved cloud and their closest ground truth points. 
Then the larger of these two values was taken as the C2C error. 
For the C2P error, we first measured the average of the Euclidean distances between ground truth points and tangent planes at their closest points in the super-resolved cloud, and also that between the points of the super-resolved cloud and tangent planes at their closest ground truth points. 
Again, the larger of these two values was taken as the C2P error.


Tables \ref{tab:C2C} and \ref{tab:C2P} show the average C2C and C2P errors, respectively, per PC model for sub-sampling ratios\footnote{$\text{sub-sampling ratio}=\frac{\# \text{of points in sub-sampled PC}}{\# \text{of points in ground truth PC}}$} of 0.2, 0.3, 0.4, 0.5, 0.6 under five different SR methods. 
We observe that our scheme achieved the best results on average in terms of C2C and C2P errors for all SR methods with different sub-sampling ratios. 
As an example, for FGTV SR method, our scheme reduced the second best average C2C and C2P among existing sub-sampling schemes by 11.6\%, 7.2\%, 9.5\%, 10.4\%, 11.2\% and 17.5\%, 12.1\%, 12.9\%, 7.5\%, 8.3\%, respectively, for sub-sampling ratio 0.2, 0.3, 0.4, 0.5, 0.6.     

Besides numerical comparison, representative visual results for three models are shown in Fig.\;\ref{fig:bunny}, \ref{fig:armadillo}, and \ref{fig:dragon} where the SR reconstructions of the sub-sampled PCs were colorized by the distance from the ground truth PC to the surface of the super-resolved PC. 
We used CloudCompare\footnote{https://www.danielgm.net/cc/} tool to colorize the super-resolved PC. Here, green and yellow represent the smaller negative and positive errors (distance), while blue and red represent the larger negative and positive errors (the corresponding color-map is included at the right of each figure). 
As discussed in Section\;\ref{sec:related}, existing PC sub-sampling methods in the literature either employ approaches that do not preserve geometric characteristics (like sharp edges, corners, valleys) pro-actively, or maintain those geometric features without ensuring the overall PC-SR reconstruction quality. 
Thus, from Fig.\;\ref{fig:bunny}, \ref{fig:armadillo}, and \ref{fig:dragon}, we observe that the super-resolved models of sub-sampled PC obtained from different methods have distorted surfaces (\textit{i.e.}, red and blue colors) in some valleys or edges. 
In contrast, in our method, we performed PC sub-sampling while systematically minimizing a global PC-SR reconstruction error metric. 
As observed, SR reconstruction results of sub-sampled PC using our scheme have well-preserved geometric details compared to existing schemes.

\subsection{Minimum Eigenvalue Comparisons}
\label{sec:eig_comp}

As we discussed in Section~\ref{sec:sampling_obj}, our PC sub-sampling objective is to maximize $\lambda_{\min}(\B)$. We computed $\lambda_{\min}(\B)$ for each PC model for different sub-sampling ratios given sampling matrices $\H$ obtained via our optimization. 
For comparisons, we computed $\lambda_{\min}(\B)$ given $\H$ obtained via the sub-sampling method in AGBS~\cite{dinesh2020ICIP}. 
Mean $\lambda_{\min}(\B)$ per PC model obtained via both sub-sampling methods is shown in Table~\ref{tab:mineigen} as a function of the sub-sampling ratio. 
As shown in Table~\ref{tab:mineigen}, $\lambda_{\min}(\B)$ for our optimization is significantly larger than $\lambda_{\min}(\B)$ for AGBS, for different sub-sampling ratios.

\subsection{Graph Similarity Comparisons}

A balanced graph $\cG_{B}$ obtained from our graph balancing algorithm is closer to the original graph $\cG$ than $\cG_{B}$ obtained from AGBS \cite{dinesh2020ICIP}. 
To effectively measure similarity between two graphs, we use two different metrics called \textit{relative error} (RE)~\cite{egilmez2017} and \textit{DELTACON similarity} (DCS) ~\cite{koutra2013}. 
RE between $\cG$ and $\cG_{B}$ is computed~\cite{egilmez2017} as
\begin{equation}
    \text{RE}=\frac{\norm{\L-\L_{B}}_{F}}{\norm{\L}_{F}},
\end{equation}
where $\norm{\cdot}_{F}$ is the Frobenius norm. Metric DCS computes the pairwise node similarities in $\cG$, and compares them with the ones in $\cG_{B}$ as follows (see~\cite{koutra2013} for more information).     

First a similarity matrix $\S^{\mathcal{X}}$ is computed for graph $\mathcal{X}$, where $\mathcal{X}$ is either original graph $\cG$ or its approximated balance graph $\cG_{B}$:
\begin{equation}
    \S^{\mathcal{X}}=[\I+\epsilon^{2}\D_{\mathcal{X}}-\epsilon\A_{\mathcal{X}}]^{-1},
\end{equation}
where $\A_{\mathcal{X}}$ and $\D_{\mathcal{X}}$ are adjacency matrix and degree matrix of graph $\mathcal{X}$. 
$\epsilon>0$ is a small constant. 
Then Matusita distance $d_{m}$ between $\cG$ and $\cG_{B}$ is calculated as
\begin{equation}
    d_{m}=\sqrt{\sum_{i=1}^{3n}\sum_{j=1}^{3n}\left(\sqrt{s^{\cG}_{i,j}}-\sqrt{s^{\cG_{B}}_{i,j}}\right)^{2}}.
\end{equation}
Finally, DCS metric is computed as $\frac{1}{1+d_{m}}$; it takes values between 0 and 1, where the value 1 means the perfect similarity.

As shown in Table~\ref{tab:RE_DCS}, a balanced graph $\cG_{B}$ computed using our graph balancing algorithm is closer to the original graph $\cG$ than $\cG_{B}$ obtained from AGBS \cite{dinesh2020ICIP}, in terms of both RE and DCS metrics.

\section{Conclusion}
\label{sec:conclude}
Acquired point clouds tend to be very large in size, which leads to high processing costs.
We proposed a fast point cloud sub-sampling method based on graph sampling, 
where the objective is to minimize the worst-case super-resolution (SR) reconstruction error via selection of a sampling matrix $\H$. 
We showed that this is equivalent to maximizing the smallest eigenvalue $\lambda_{\min}$ of a coefficient matrix $\H^{\top} \H + \mu \cL$, where $\cL$ is a symmetric positive semi-definite (PSD) matrix computed from connectivity information of neighboring 3D points.
We proposed to maximize the lower bound $\lambda^-_{\min}(\H^{\top} \H + \mu \cL)$ efficiently instead in three steps.
Interpreting $\cL$ as a generalized graph Laplacian matrix of an unbalanced signed graph $\cG$, we first approximated $\cG$ with a balanced graph $\cG_B$, resulting in generalized Laplacian $\cL_B$. 
Leveraging on a recent theorem Gershgorin disc perfect alignment (GDPA), we then performed a similarity transform $\cL_p = \S \cL_B \S^{-1}$ where $\S = \text{diag}(v_1, v_2, \ldots)$ and $\v$ is the first eigenvector of $\cL_B$, so that $\cL_p$ has all its Gershgorin disc left-ends aligned at the same value $\lambda_{\min}(\cL_B)$.
Finally, we applied an eigen-decomposition-free graph sampling algorithm GDAS in roughly linear time to choose 3D points to maximize $\lambda^-_{\min}(\H^{\top} \H + \mu \cL_p)$.
Experimental results showed that our PC sub-sampling scheme outperforms competitors, both numerically and visually, in SR reconstruction quality. 


\ifCLASSOPTIONcaptionsoff
  \newpage
\fi


\bibliographystyle{IEEEtran}
\vspace{-0pt}
\bibliography{refs}

\begin{thebibliography}{10}
\providecommand{\url}[1]{#1}
\csname url@samestyle\endcsname
\providecommand{\newblock}{\relax}
\providecommand{\bibinfo}[2]{#2}
\providecommand{\BIBentrySTDinterwordspacing}{\spaceskip=0pt\relax}
\providecommand{\BIBentryALTinterwordstretchfactor}{4}
\providecommand{\BIBentryALTinterwordspacing}{\spaceskip=\fontdimen2\font plus
\BIBentryALTinterwordstretchfactor\fontdimen3\font minus
  \fontdimen4\font\relax}
\providecommand{\BIBforeignlanguage}[2]{{%
\expandafter\ifx\csname l@#1\endcsname\relax
\typeout{** WARNING: IEEEtran.bst: No hyphenation pattern has been}%
\typeout{** loaded for the language `#1'. Using the pattern for}%
\typeout{** the default language instead.}%
\else
\language=\csname l@#1\endcsname
\fi
#2}}
\providecommand{\BIBdecl}{\relax}
\BIBdecl

\bibitem{shen2013}
J.~Shen, P.~C. Su, S.~C.~S. Cheung, and J.~Zhao, ``Virtual mirror rendering
  with stationary {RGB-D} cameras and stored 3-{D} background,'' \emph{IEEE
  Trans. Image Process.}, vol.~22, no.~9, pp. 3433--3448, 2013.

\bibitem{mekuria2017}
R.~{Mekuria}, K.~{Blom}, and P.~{Cesar}, ``Design, implementation, and
  evaluation of a point cloud codec for tele-immersive video,'' \emph{IEEE
  Trans. Circuits Syst. Video Technol.}, vol.~27, no.~4, pp. 828--842, 2017.

\bibitem{zhou2018voxelnet}
Y.~Zhou and O.~Tuzel, ``Voxelnet: End-to-end learning for point cloud based 3d
  object detection,'' in \emph{CVPR}, 2018, pp. 4490--4499.

\bibitem{pomerleau2013comparing}
F.~Pomerleau, F.~Colas, R.~Siegwart, and S.~Magnenat, ``Comparing icp variants
  on real-world data sets,'' \emph{Autonomous Robots}, vol.~34, no.~3, pp.
  133--148, 2013.

\bibitem{rusu2012downsampling}
R.~B. Rusu and B.~Steder, ``Downsampling a pointcloud using a voxelgrid
  filter,'' [Online]
  http://pointclouds.org/documentation/tutorials/voxel\_grid.php, 2012.

\bibitem{Rusu2011}
R.~B. {Rusu} and S.~{Cousins}, ``3d is here: Point cloud library (pcl),'' in
  \emph{IEEE International Conference on Robotics and Automation}, May 2011,
  pp. 1--4.

\bibitem{qi2019}
J.~{Qi}, W.~{Hu}, and Z.~{Guo}, ``Feature preserving and
  uniformity-controllable point cloud simplification on graph,'' in
  \emph{ICME}, 2019, pp. 284--289.

\bibitem{chen2018}
S.~C. \textit{et al}, ``Fast resampling of three-dimensional point clouds via
  graphs,'' \emph{IEEE Trans. Signal Process.}, vol.~66, no.~3, pp. 666--681,
  Feb 2018.

\bibitem{yang2015feature}
X.~Yang, K.~Matsuyama, K.~Konno, and Y.~Tokuyama, ``Feature-preserving
  simplification of point cloud by using clustering approach based on mean
  curvature,'' \emph{The Society for Art and Science}, vol.~14, no.~4, pp.
  117--128, 2015.

\bibitem{yuanchao2019ICASSP}
Y.~Bai, G.~Cheung, F.~Wang, X.~Liu, and W.~Gao, ``Reconstruction-cognizant
  graph sampling using gershgorin disc alignment,'' in \emph{ICASSP}, 2019, pp.
  5396--5400.

\bibitem{bai2020}
Y.~Bai, F.~Wang, G.~Cheung, Y.~Y.~Nakatsukasa, and W.~Gao, ``Fast graph
  sampling set selection using gershgorin disc alignment,'' \emph{IEEE Trans.
  Signal Process.}, March 2020.

\bibitem{ortega18ieee}
A.~Ortega, P.~Frossard, J.~Kovacevic, J.~M.~F. Moura, and P.~Vandergheynst,
  ``Graph signal processing: Overview, challenges, and applications,'' in
  \emph{Proceedings of the {IEEE}}, vol. 106, no.5, May 2018, pp. 808--828.

\bibitem{cheung18}
G.~Cheung, E.~Magli, Y.~Tanaka, and M.~Ng, ``Graph spectral image processing,''
  in \emph{Proceedings of the {IEEE}}, vol. 106, no.5, May 2018, pp. 907--930.

\bibitem{anis16}
A.~Anis, A.~Gadde, and A.~Ortega, ``Efficient sampling set selection for
  bandlimited graph signals using graph spectral proxies,'' in \emph{IEEE
  Trans. Signal Process.}, vol. 64, no.14, July 2016, pp. 3775--3789.

\bibitem{tsitsvero2016}
M.~{Tsitsvero}, S.~{Barbarossa}, and P.~{Di Lorenzo}, ``Signals on graphs:
  Uncertainty principle and sampling,'' \emph{IEEE Trans. Signal Process.},
  vol.~64, no.~18, pp. 4845--4860, 2016.

\bibitem{chamon2018}
L.~F.~O. {Chamon} and A.~{Ribeiro}, ``Greedy sampling of graph signals,''
  \emph{IEEE Trans. Signal Process.}, vol.~66, no.~1, pp. 34--47, 2018.

\bibitem{chen2015_sampling}
S.~C. \textit{et al}, ``Discrete signal processing on graphs: Sampling
  theory,'' \emph{IEEE Trans. Signal Process.}, vol.~63, no.~24, pp.
  6510--6523, 2015.

\bibitem{varga04}
R.~S. Varga, \emph{Ger$\check{\textrm{s}}$gorin and His Circles}.\hskip 1em
  plus 0.5em minus 0.4em\relax Springer, 2004.

\bibitem{dinesh2020}
C.~{Dinesh}, G.~{Cheung}, and I.~V. {Bajić}, ``Point cloud denoising via
  feature graph laplacian regularization,'' \emph{IEEE Trans. Image Process.},
  vol.~29, pp. 4143--4158, 2020.

\bibitem{easley2010networks}
D.~Easley and J.~Kleinberg, \emph{Networks, crowds, and markets: Reasoning
  about a Highly Connected World}.\hskip 1em plus 0.5em minus 0.4em\relax
  Cambridge university press Cambridge, 2010, vol.~8.

\bibitem{yang2020}
C.~Yang, G.~Cheung, and W.~Hu, ``Signed graph metric learning via {G}ershgorin
  disc alignment,'' \emph{arXiv:2006.08816}, 2020.

\bibitem{knyazev2001}
A.~V. Knyazev, ``Toward the optimal preconditioned eigensolver: Locally optimal
  block preconditioned conjugate gradient method,'' \emph{SIAM journal on
  scientific computing}, vol.~23, no.~2, pp. 517--541, 2001.

\bibitem{Ranzuglia2012}
G.~Ranzuglia, M.~Callieri, M.~Dellepiane, P.~Cignoni, and R.~Scopigno,
  ``Meshlab as a complete tool for the integration of photos and color with
  high resolution 3d geometry data,'' in \emph{Conference in Computer
  Applications and Quantitative Methods in Archaeolog}, 2012, pp. 406--416.

\bibitem{corsini2012efficient}
M.~Corsini, P.~Cignoni, and R.~Scopigno, ``Efficient and flexible sampling with
  blue noise properties of triangular meshes,'' \emph{IEEE Trans. Vis. Comput.
  Graphics}, vol.~18, no.~6, pp. 914--924, 2012.

\bibitem{shi2011}
B.~Q. Shi, J.~Liang, and Q.~Liu, ``Adaptive simplification of point cloud using
  k-means clustering,'' \emph{Computer-Aided Design}, vol.~43, no.~8, pp.
  910--922, 2011.

\bibitem{dinesh2020ICIP}
C.~{Dinesh}, G.~{Cheung}, F.~{Wang}, and I.~V. {Bajić}, ``Sampling of {3D}
  point cloud via gershgorin disc alignment,'' in \emph{ICIP}, 2020, pp.
  2736--2740.

\bibitem{tian2017}
D.~Tian, H.~Ochimizu, C.~Feng, R.~Cohen, and A.~Vetro, ``Geometric distortion
  metrics for point cloud compression,'' in \emph{ICIP}, Sept 2017, pp.
  3460--3464.

\bibitem{dinesh19}
C.~Dinesh, G.~Cheung, and I.~Bajic, ``{3D} point cloud super-resolution via
  graph total variation on surface normals,'' in \emph{IEEE International
  Conference on Image Processing}, Taipei, Taiwan, September 2019.

\bibitem{anderson2005lidar}
E.~S. Anderson, J.~A. Thompson, and R.~E. Austin, ``Lidar density and linear
  interpolator effects on elevation estimates,'' \emph{International Journal of
  Remote Sensing}, vol.~26, no.~18, pp. 3889--3900, 2005.

\bibitem{ryde20103d}
J.~Ryde and H.~Hu, ``3d mapping with multi-resolution occupied voxel lists,''
  \emph{Autonomous Robots}, vol.~28, no.~2, p. 169, 2010.

\bibitem{loop2013real}
C.~Loop, C.~Zhang, and Z.~Zhang, ``Real-time high-resolution sparse
  voxelization with application to image-based modeling,'' in \emph{Proceedings
  of the 5th High-Performance Graphics Conference}, 2013, pp. 73--79.

\bibitem{peng2005geometry}
J.~Peng and C.-C.~J. Kuo, ``Geometry-guided progressive lossless 3d mesh coding
  with octree (ot) decomposition,'' vol.~24, no.~3, pp. 609--616, 2005.

\bibitem{hornung2013octomap}
A.~Hornung, K.~M. Wurm, M.~Bennewitz, C.~Stachniss, and W.~Burgard, ``Octomap:
  An efficient probabilistic 3d mapping framework based on octrees,''
  \emph{Autonomous robots}, vol.~34, no.~3, pp. 189--206, 2013.

\bibitem{yu2010asm}
Z.~Yu, H.~S. Wong, H.~Peng, and Q.~Ma, ``{ASM}: An adaptive simplification
  method for 3d point-based models,'' \emph{Computer-Aided Design}, vol.~42,
  no.~7, pp. 598--612, 2010.

\bibitem{benhabiles2013}
H.~Benhabiles, O.~Aubreton, H.~Barki, and H.~Tabia, ``Fast simplification with
  sharp feature preserving for {3D} point clouds,'' in \emph{International
  symposium on programming and systems (ISPS)}, 2013, pp. 47--52.

\bibitem{moenning2004}
C.~Moenning and N.~A. Dodgson, ``Intrinsic point cloud simplification,'' in
  \emph{GrahiCon}, 2004.

\bibitem{song2009}
H.~Song and H.~Y. Feng, ``A progressive point cloud simplification algorithm
  with preserved sharp edge data,'' \emph{The International Journal of Advanced
  Manufacturing Technology}, vol.~45, no. 5-6, pp. 583--592, 2009.

\bibitem{yang2015}
X.~Yang, K.~Matsuyama, K.~Konno, and Y.~Tokuyama, ``Feature-preserving
  simplification of point cloud by using clustering approach based on mean
  curvature,'' \emph{The Society for Art and Science}, vol.~14, no.~4, pp.
  117--128, 2015.

\bibitem{chung1997}
F.~R. Chung and F.~C. Graham, \emph{Spectral graph theory}.\hskip 1em plus
  0.5em minus 0.4em\relax American Mathematical Soc., 1997, no.~92.

\bibitem{Biyikoglu2005}
T.~Biyikoglu, J.~Leydold, and P.~F. Stadler, ``Nodal domain theorems and
  bipartite subgraphs.'' \emph{The Electronic Journal of Linear Algebra},
  vol.~13, pp. 344--351, 2005.

\bibitem{shuman2013}
D.~I. Shuman, S.~K. Narang, P.~Frossard, A.~Ortega, and P.~Vandergheynst, ``The
  emerging field of signal processing on graphs: Extending high-dimensional
  data analysis to networks and other irregular domains,'' \emph{IEEE Signal
  Process. Mag.}, vol.~30, no.~3, pp. 83--98, 2013.

\bibitem{leskovec2010}
J.~Leskovec, D.~Huttenlocher, and J.~Kleinberg, ``Signed networks in social
  media,'' in \emph{Proceedings of the SIGCHI conference on human factors in
  computing systems}, 2010, pp. 1361--1370.

\bibitem{wang18}
F.~Wang, Y.~Wang, and G.~Cheung, ``A-optimal sampling and robust reconstruction
  for graph signals via truncated {Neumann} series,'' in \emph{IEEE Signal
  Process. Lett.}, vol. 25, no.5, May 2018, pp. 680--684.

\bibitem{guennebaud2008}
G.~Guennebaud, M.~Germann, and M.~Gross, ``Dynamic sampling and rendering of
  algebraic point set surfaces,'' in \emph{Computer Graphics Forum}, vol.~27,
  no.~2, 2008, pp. 653--662.

\bibitem{huang2013}
H.~Huang, S.~Wu, M.~Gong, D.~Cohen-Or, U.~Ascher, and H.~R. Zhang, ``Edge-aware
  point set resampling,'' \emph{ACM Trans. Graph.}, vol.~32, no.~1, p.~9, 2013.

\bibitem{dinesh2019}
C.~{Dinesh}, G.~{Cheung}, and I.~V. {Bajić}, ``3d point cloud color denoising
  using convex graph-signal smoothness priors,'' in \emph{MMSP}, Sep. 2019, pp.
  1--6.

\bibitem{dinesh2020ICASSP}
------, ``Super-resolution of {3D} color point clouds via fast graph total
  variation,'' in \emph{ICASSP}, 2020, pp. 1983--1987.

\bibitem{tomasi98}
C.~Tomasi and R.~Manduchi, ``Bilateral filtering for gray and color images,''
  in \emph{ICCV}, Bombay, India, 1998.

\bibitem{dinesh2018fast}
C.~Dinesh, G.~Cheung, I.~V. Bajic, and G.~Yang, ``Local 3{D} point cloud
  denoising via bipartite graph approximation \& total variation,'' in
  \emph{MMSP}, 2018.

\bibitem{huang2001}
J.~Huang and C.~H. Menq, ``Automatic data segmentation for geometric feature
  extraction from unorganized 3-d coordinate points,'' \emph{IEEE Trans. Robot.
  Autom.}, vol.~17, no.~3, pp. 268--279, Jun 2001.

\bibitem{klasing2009}
K.~Klasing, D.~Althoff, D.~Wollherr, and M.~Buss, ``Comparison of surface
  normal estimation methods for range sensing applications,'' in \emph{IEEE
  International Conference on Robotics and Automation}, May 2009, pp.
  3206--3211.

\bibitem{matrixAnalysis}
R.~A. Horn, R.~A. Horn, and C.~R. Johnson, \emph{Matrix analysis}.\hskip 1em
  plus 0.5em minus 0.4em\relax Cambridge university press, 1990.

\bibitem{el2002}
L.~E. Ghaoui, ``Inversion error, condition number, and approximate inverses of
  uncertain matrices,'' \emph{Linear algebra and its applications}, vol. 343,
  pp. 171--193, 2002.

\bibitem{varga2010gervsgorin}
R.~S. Varga, \emph{Gershgorin and his circles}.\hskip 1em plus 0.5em minus
  0.4em\relax Springer Science \& Business Media, 2010, vol.~36.

\bibitem{milgram1972}
M.~Milgram, ``Irreducible graphs,'' \emph{Journal of Combinatorial Theory,
  Series B}, vol.~12, no.~1, pp. 6--31, 1972.

\bibitem{rue2005}
H.~Rue and L.~Held, \emph{Gaussian Markov random fields: theory and
  applications}.\hskip 1em plus 0.5em minus 0.4em\relax CRC press, 2005.

\bibitem{gadde2015}
A.~Gadde and A.~Ortega, ``A probabilistic interpretation of sampling theory of
  graph signals,'' in \emph{ICASSP}, 2015, pp. 3257--3261.

\bibitem{zeng2017}
J.~Zeng, G.~Cheung, and A.~Ortega, ``Bipartite approximation for graph wavelet
  signal decomposition,'' \emph{IEEE Trans. Signal Process.}, vol.~65, no.~20,
  pp. 5466--5480, Oct 2017.

\bibitem{levoy2005stanford}
M.~Levoy, J.~Gerth, B.~Curless, and K.~Pull, ``The {Stanford 3D} scanning
  repository,'' [Online] https://graphics.stanford.edu/data/3Dscanrep/.

\bibitem{berger2013}
M.~Berger, J.~Levine, L.~G. Nonato, G.~Taubin, and C.~T. Silva, ``A benchmark
  for surface reconstruction,'' \emph{ACM Transactions on Graphics}, vol.~32,
  no.~2, p.~20, 2013.

\bibitem{zhang2015}
W.~Zhang, B.~Deng, J.~Zhang, S.~Bouaziz, and L.~Liu, ``Guided mesh normal
  filtering,'' in \emph{Computer Graphics Forum}, vol.~34, no.~7, 2015, pp.
  23--34.

\bibitem{potenziani2015}
M.~Potenziani, M.~Callieri, M.~Dellepiane, M.~Corsini, F.~Ponchio, and
  R.~Scopigno, ``3dhop: 3d heritage online presenter,'' \emph{Computers \&
  Graphics}, vol.~52, pp. 129--141, 2015.

\bibitem{d20168i}
E.~d’Eon, B.~Harrison, T.~Myers, and A.~Chou, ``8i voxelized full bodies,
  version 2--a voxelized point cloud dataset,'' \emph{ISO/IEC JTC1/SC29 Joint
  WG11/WG1 (MPEG/JPEG) input document m40059 M}, vol. 74006, 2016.

\bibitem{xu2018}
Y.~{Xu}, W.~{Hu}, S.~{Wang}, X.~{Zhang}, S.~{Wang}, S.~{Ma}, and W.~{Gao},
  ``Cluster-based point cloud coding with normal weighted graph fourier
  transform,'' in \emph{ICASSP}, April 2018, pp. 1753--1757.

\bibitem{levoy2005}
M.~Levoy, J.~Gerth, B.~Curless, and K.~Pull, ``The {Stanford 3D} scanning
  repository,'' [Online] https://graphics.stanford.edu/data/3Dscanrep/.

\bibitem{egilmez2017}
H.~E. {Egilmez}, E.~{Pavez}, and A.~{Ortega}, ``Graph learning from data under
  laplacian and structural constraints,'' \emph{IEEE J. Sel. Topics Signal
  Process.}, vol.~11, no.~6, pp. 825--841, Sep. 2017.

\bibitem{koutra2013}
D.~Koutra, J.~T. Vogelstein, and C.~Faloutsos, ``Deltacon: {A} principled
  massive-graph similarity function,'' in \emph{SIAM International Conference
  on Data Mining}, 2013, pp. 162--170.

\end{thebibliography}
\vspace{-30pt}

\end{document}